\documentclass{article}

\usepackage{amssymb, amsmath, amsthm, graphicx, mathtools, url, enumitem, balance, hyperref}
\usepackage{tikz, pgfplots, filecontents, multirow}
\usepackage{algorithm, algpseudocode}
\usepackage{subcaption}
\usepackage[inkscapeformat=png]{svg}
\usepackage[labelformat=simple]{subcaption}
\usepackage{xcolor}
\usepackage{multirow, booktabs}
\usepackage{threeparttable}

\usepackage{pdfpages, fullpage, textcomp}

\usetikzlibrary{patterns, positioning, arrows.meta, shapes, positioning, fit, backgrounds, ext.paths.ortho, trees}

\definecolor{OI1}{RGB}{230,159,0}
\definecolor{OI2}{RGB}{86,180,233}
\definecolor{OI3}{RGB}{0,158,115}
\definecolor{OI4}{RGB}{240,228,66}
\definecolor{OI5}{RGB}{204,121,167}
\definecolor{OI6}{RGB}{213,94,0}
\definecolor{lightergray}{RGB}{210, 210, 210}

\newtheorem{thm}{Theorem}
\newtheorem{lem}{Lemma}
\newtheorem{rmk}{Remark}
\newtheorem{pro}{Problem}
\newtheorem{col}{Corollary}
\newtheorem{mydef}{Definition}
\newtheorem{eg}{Example}
\newtheorem{hyp}{Hypothesis}

\title{Generalizing Fair Top-$k$ Selection: An Integrative Approach}
\author{
    Guangya Cai\footnote{University of Minnesota, Twin Cities, MN, USA; \texttt{cai00171@umn.edu}.}
}

\date{\vspace{-5ex}}

\begin{document}

\maketitle

\begin{abstract}
  Fair top-$k$ selection, which ensures appropriate proportional representation of members from minority or historically disadvantaged groups among the top-$k$ selected candidates, has drawn significant attention. We study the problem of finding a fair (linear) scoring function with multiple protected groups while also minimizing the disparity from a reference scoring function. This generalizes the prior setup, which was restricted to the single-group setting without disparity minimization. Previous studies imply that the number of protected groups may have a limited impact on the runtime efficiency. However, driven by the need for experimental exploration, we find that this implication overlooks a critical issue that may affect the fairness of the outcome. Once this issue is properly considered, our hardness analysis shows that the problem may become computationally intractable even for a two-dimensional dataset and small values of $k$. However, our analysis also reveals a gap in the hardness barrier, enabling us to recover the efficiency for the case of small $k$ when the number of protected groups is sufficiently small. Furthermore, beyond measuring disparity as the ``distance'' between the fair and the reference scoring functions, we introduce an alternative disparity measure---utility loss---that may yield a more stable scoring function under small weight perturbations.  Through careful engineering trade-offs that balance implementation complexity, robustness, and performance, our augmented two-pronged solution demonstrates strong empirical performance on real-world datasets, with experimental observations also informing algorithm design and implementation decisions.
\end{abstract}

\section{Introduction}
Top-$k$ selection, where one selects $k$ most relevant items from a dataset with $n$ items, is a fundamental task in information retrieval and decision-making. To indicate the relevance of an item, each item is typically assigned a score derived from its attributes. Then, the $k$ highest-scoring items are selected. For example, in school admission, a college admissions officer may evaluate several attributes of an applicant (GPA, SAT score, essay score, etc.), assign an overall score to each applicant, and admit the (say) top-500 students with the highest scores. To determine such a score, a scoring function is needed, which is typically a linear scoring function that weights and sums the attribute values by relative importance.

With decision-making increasingly being aided by automatic algorithms in many areas, such as hiring \cite{geyik2019fairness} and school admission \cite{peskun2007effectiveness}, there has been growing concern that using automated algorithms may result in decreased fairness, thereby promoting stereotypes and polarizing opinions. Given the importance of top-$k$ selection in the algorithmic decision-making system, fair top-$k$ selection has drawn significant attention~\cite{zehlike2017fa, kleinberg2018selection, celis2018ranking, asudeh2019designing, yang2019balanced, campbell2024query, liu2024fair, cai2025finding}. Among many different ways of quantifying fairness (e.g.~\cite{dwork2012fairness, yang2017measuring, friedler2021possibility, zehlike2022fairness}), a common one is based on proportional representation. Each item has categorical attributes (e.g., gender, race, ethnicity), and one or a combination of these attributes is regarded as a \textit{sensitive} attribute. Some values of this sensitive attribute indicate memberships of minority or historical disadvantaged groups, often referred to as~\textit{protected} groups. The goal of fair top-$k$ selection is to select a top-$k$ subset where for each protected group, its proportional in the top-$k$ subset is roughly the same as that in the whole dataset.

In many applications, one often want the selected top-$k$ subset being both high-quality and fair. This is often achieved by applying proportional fairness constraints. For each protected group, one specifies a lower bound and/or upper bound for its proportion in the selected top-$k$ subset. In many previous works (e.g.,~\cite{zehlike2017fa, celis2018ranking, yang2019balanced, campbell2024query}), the fairness constraints are applied after scores to each item are identified. However, this method may result in a different selection criterion to different groups of people, introducing potential legal risks~(e.g., Ricci vs. DeStefano~\cite{ricci}). So, designing a fair scoring function to begin with could be a better choice. Consider the following variation of the example
from~\cite{asudeh2019designing, cai2025finding}.

\begin{eg}\label{eg:admission}
   A college admissions officer is designing a selection process, let us say, based on applicants' GPA ($g$) and SAT score ($s$), both normalized. The officer might believe that $g$ and $s$ should have an approximately equal weight, thus having the scoring function $f(c) = 0.5 \times g + 0.5 \times s$ to compute a score for each applicant $c$. Then, the (say) top-500 applicants are selected by score. However, this may yield only 150 female and 30 Black applicants in the top-500 whereas the fairness constraints are that at least $40\%$ and $10\%$ of the admitted class be female and Black, respectively. This violation may be due to gender and race disparities in the SAT score~\cite{sat}. 
    
    To address this, the officer may try to find a ``nearby'' fair scoring function that preserves the explainability of the choice of weights (approximately equal weighting). The weights of $g$ and $s$ can range between $0.45$ and $0.55$. Since multiple possible fair scoring functions may exist within this range, the officer may use a secondary criterion to choose among them, for example, choosing the one that minimizes the sum of absolute differences from the original weights. By running an algorithm for finding such a scoring function, the officer might successfully find a fair scoring function $f(c) = 0.54 \times g + 0.46 \times s$.
    
    However, it is possible that no such function within the allowable weight range exists. For example, adjusting the weights by at most $0.05$ may be sufficient to find a scoring function to satisfy the gender fairness constraint but still fail to satisfy the racial fairness constraint. In such cases, the algorithm reports a failure. This suggests that the officer's design principle (assigning approximately equal weights) may not be appropriate for the fairness consideration. The officer may then revise the weighting principle and with the aid of the algorithm for finding a fair scoring function, arrive at a new scoring function $f(c) = 0.6\times g + 0.4 \times s$ that satisfies both the gender and racial fairness constraints.
\end{eg}

Previous work~\cite{asudeh2019designing, liu2024fair, cai2025finding} has explored finding such a fair scoring function in a similar setup. However, most of these methods~\cite{asudeh2019designing, liu2024fair} do not scale efficiently for high-dimensional datasets and do not address a critical issue: potential ties among candidates. Since tie-breaking can affect the number of selected candidates from protected groups, this issue merits greater attention~\cite{cachel2025group,cai2025finding}. A more recent approach~\cite{cai2025finding} improves scalability and explicitly identifies and addresses the tie-breaking issue; however, it considers only a single protected group and returns an arbitrary fair weight vector within the allowable scoring function subspace, which may not be desirable in some applications (as illustrated in Example~\ref{eg:admission}). 

\begin{figure}[tbh!]
    \centering
    \resizebox{0.95\textwidth}{!}{
    \begin{tikzpicture}[
  cluster/.style = {draw, thick, fill=#1!30, inner sep=14pt, rounded corners},
  node distance=12pt,
  level distance=120pt
]

\node[inner sep=0pt, text width=80pt, align=center] (hard) {\Large\textbf{Hardness\\Analysis}};
\begin{scope}[on background layer]
    \node[cluster=OI2, fit=(hard), anchor=center] (hardness_cluster) {};
\end{scope}

\node[inner sep=0pt, right=120pt of hard, text width=80pt, align=center] (alg) {\Large\textbf{Algorithm\\Design}};
\begin{scope}[on background layer]
\node[cluster=OI3, fit=(alg), anchor=center] (alg_cluster) {};
\end{scope}

\node[inner sep=0pt, right=120pt of alg, text width=80pt, align=center] (pract) {\Large \textbf{Practical Algorithms}};
\begin{scope}[on background layer]
\node[cluster=OI1, fit=(pract)] (pract_cluster){};
\end{scope}

\node[inner sep=0pt, right=120pt of pract] (exp) {\Large\textbf{Experiments}};
\begin{scope}[on background layer]
\node[cluster=OI5, fit=(exp)] (exp_cluster) {};
\end{scope}

\draw [very thick, -{Latex[length=3mm]}] (hardness_cluster.east) -- (alg_cluster.west) node[midway,above] {\Large guides};
\draw [very thick, -{Latex[length=3mm]}] (alg_cluster.east) -- (pract_cluster.west) node[midway,above] {\Large augments};
\draw [very thick, -{Latex[length=3mm]}] (pract_cluster.east) -- (exp_cluster.west) node[midway,above] {\Large undergo};
\draw [very thick, -{Latex[length=3mm]}, dashed] (exp_cluster.north) |-| [ratio=5.0] (hardness_cluster.north) node[midway, above] {\Large motivate};
\draw [very thick, -{Latex[length=3mm]}, dashed] (exp_cluster.south) |-| [ratio=5.0] (alg_cluster.south) node[midway, above] {\Large inform};
\end{tikzpicture}
}
    \caption{The structure and interplay of key components in this work. Solid arrows denote the primary workflow, while dashed arrows indicate feedback and motivation.}\label{fig:structure2}
\end{figure}

\subsection{Our contributions}
In this work, we study the problem of finding a fair linear scoring function with multiple protected groups, while also minimizing the disparity from a reference, unfair scoring function.  We adopt an integrative framework similar to that of \cite{cai2025finding}, encompassing both high-level theoretical reasoning and low-level engineering optimizations (see Figure~\ref{fig:structure2}). Previous work~\cite{asudeh2019designing, liu2024fair} implies that a larger number of protected groups may not impose significant runtime overhead. However, we find that this implication overlooks the tie-breaking issue identified in~\cite{cai2025finding}. Once this critical issue that may affect the fairness of the outcome is properly considered, we show that the problem becomes NP-hard even for a two-dimensional dataset. Furthermore, the ``Small $k$ Opportunity'' identified in~\cite{cai2025finding}, which yields an efficient algorithm when $k$ is sufficiently small, largely vanishes in the worst case: we establish a near $\Omega(n^k)$ lower bound for constant values of $k$ under well-known conjectures in fine-grained complexity~\cite{williams2018some}. Notably, the hardness analysis is driven by the need for generating inputs for our experimental explorations. Despite these negative findings, the hardness analysis also reveals a gap in the hardness barrier, that when the number of protected groups is sufficiently small, the ``Small $k$ Opportunity'' can be recovered. With this gap exploited, we further augment the algorithms of the two-pronged solution proposed in~\cite{cai2025finding}, making them efficient in finding a fair scoring function with multiple protected groups while minimizing the disparity. Specifically, we introduce a new disparity measure---utility loss---which differs from those used in prior works~\cite{asudeh2019designing, liu2024fair} and enables us to find a more stable scoring function under small weight perturbations. To implement our augmented two-pronged solution in practice, we address several practical engineering considerations, balancing implementation complexity, robustness, and performance through careful engineering trade-offs. Experimental results on real-world datasets demonstrate the efficiency of our augmented two-pronged solution, while also informing algorithm design and implementation decisions, including the choice between the two algorithms in our augmented two-pronged solution.

The source code, data, and other artifacts have been made available at \url{https://github.com/caiguangya/fair-topk-general}.

\subsection{Related work}
Algorithmic fairness has emerged as a prominent research area in computer science, with particular relevance to data processing systems. A broad range of research has addressed fairness across diverse tasks, including clustering~\cite{chierichetti2017fair, chen2019proportionally}, dimensionality reduction~\cite{samadi2018price, kleindessner2023efficient}, and matching~\cite{chierichetti2019matroids, esmaeili2023rawlsian}. Fair top-$k$ selection, especially the approach that selects the $k$-best items under proportional fairness constraints, has also been widely studied \cite{zehlike2017fa, kleinberg2018selection, celis2018ranking, asudeh2019designing, yang2019balanced, campbell2024query, cai2025finding}. See \cite{zehlike2022fairness} for a survey. Notably, intersectional fairness, which considers candidates belonging to multiple different minority or historically disadvantaged groups, has also drawn considerable attention \cite{yang2021causal, kong2022intersectionally, gohar2023survey},  including in the context of fair top-$k$ selection~\cite{yang2021causal}.

For the problem of fair top-$k$ selection, most prior work either scores items on a single attribute or precomputes scores from multiple attributes using a fixed, predefined scoring function.  In many applications, however, selecting the top-$k$ items requires considering several attributes simultaneously, and choosing an appropriate scoring function over these attributes could be a principled way to promote fairness. The authors in \cite{asudeh2019designing} proposed algorithms to find a fair linear scoring function if the initial one does not produce top-$k$ subsets meeting the given fairness constraints. Since a linear scoring function can be presented as a weight vector, they aim at searching for a fair weight vector minimizing the angular distance. Their algorithms can work with various fairness models based on proportional fairness constraints, including proportional fairness constraints considered in this work. However, their algorithms do not scale efficiently with the size of dataset, and their run times increase dramatically with dimensionality. Moreover, the secondary criteria considered in the work, angular distance, requires solving expensive non-linear programs. Besides applying proportional fairness constraints, the authors in \cite{liu2024fair} introduced a fairness metric, alpha-fairness, which is based on the differences between the proportions of protected group members in the top-$k$ subset and their proportions in the entire dataset. They further proposed methods for finding a fair weight vector optimizing this metric. However, their approach lacks a formal runtime analysis and still scales poorly. Moreover, the secondary criteria considered in this work is difficult to interpret, as their weight vectors are normalized by the $L_1$ norm while the distance between the fair and the reference weight vectors is measured by their $L_2$ distance. In particular, neither work~\cite{asudeh2019designing, liu2024fair} properly addresses the potential ties among candidates, and both use a distance metric to measure the disparity, which produces a scoring function that is inherently unstable (see Section~\ref{subsec:theoretical-klevel}). The author in \cite{cai2025finding} adopted an integrative framework to systematically study the problem of finding a fair weight vector under fairness constraints, resulting in a more efficient two-pronged solution while also offering valuable insights into the problem structure and algorithmic behavior. However, they only considered a single protected group and did not use a secondary criterion for choosing among multiple possible fair scoring functions.

\section{Preliminaries and problem definition}\label{sec:prelim}
In this problem, we are given a set, $\mathcal{C}$, of candidates (or items), each described by $d$ scoring attributes and a set of sensitive attribute values. For a candidate $c\in \mathcal{C}$, we use a point $p(c) \in \mathbb{R}^d$ to represent its scoring attribute values and an appropriate subset of labels $A(c) \subseteq \{\mathcal{G}_1,\mathcal{G}_2,\dots,\mathcal{G}_m\}$ to represents its sensitive attribute values. Among these labels, we have $\mathcal{G}_j$ for all $1 \leq j \leq n_p$ to be protected groups, with $n_p$ being the total number of protected groups. Following~\cite{zehlike2022fairness}, we also use $\mathcal{G}_j$ to denote the subset of candidates with $\mathcal{G}_j \in A(c)$, when the context is clear. Our definition of the sensitive attribute is slightly different from that in \cite{zehlike2022fairness}, where protected groups may be values of different sensitive attributes. Our formulation can be viewed as concatenating all relevant attributes into a single composite sensitive attribute. Also, one can construct ``artificial'' sensitive attribute values corresponding to the intersections of multiple protected groups, which makes it convenient to apply intersectional fairness constraints. An illustrative example is given below.

\begin{eg}\label{eg:attribute}
Consider a dataset with gender and race information for each candidate. The sensitive attribute concatenates the gender and race together, such that for each candidate $c$, its sensitive attribute value, $A(c)$, is an appropriate subset of $\{\text{Male},\text{Female},\ldots,\allowbreak \text{Caucasian}, \allowbreak \text{Black},\ldots \}$. Thus, the sensitive attribute value of a black female candidate is $A(c) = \{\text{Female},\text{Black}\}$. Furthermore, if both ``Female'' and ``Black'' are protected groups, an ``artificial'' sensitive attribute value, ``Black Female'' can be constructed and added to the attribute domain. In this case, the sensitive attribute value of a black female candidate becomes $A(c) = \{\text{Female},\text{Black},\allowbreak \text{Black Female}\}$.
\end{eg}

We focus on the same class of linear scoring functions as the previous work~\cite{cai2025finding}. Such a function is given by a real-valued weight vector $w = (w_1, w_2, \dots, w_d)$, where each $w_i\geq 0$ and $||w||_1 = 1$. Following~\cite{cai2025finding}, a top-$k$ subset $\tau_k^w$ of a given value $k$ and a weight vector $w$ is defined as follows:

\begin{mydef}
For a given weight vector $w$ and a positive integer $k \leq n$, a subset $\tau_k^w \subseteq C$ is a top-$k$ subset of $C$ if $|\tau_k^w| = k$ and for any pair of candidates that $c \in \tau_k^w$ and $c' \in C \setminus \tau_k^w$, $w \cdot p(c) \geq w \cdot p(c')$.
\end{mydef}

Since the top-$k$ subset for a $w$ may not be unique due to ties among candidates, we consider all top-$k$ subsets and regard $w$ as fair if any of its top-$k$ subsets satisfies all fairness constraints. For each protected group $\mathcal{G}_j$, the corresponding lower bound and upper bound are given as $L_{k}^{\scriptscriptstyle \mathcal{G}_i}$ and $U_{k}^{\scriptscriptstyle \mathcal{G}_j}$. Vector $w$ is a fair scoring weight vector if 
\begin{equation}
    L_{k}^{\scriptscriptstyle \mathcal{G}_j} \leq | \tau_k^w \cap \mathcal{G}_j | \leq U_{k}^{\scriptscriptstyle \mathcal{G}_j},
\end{equation}
for all $1 \leq j \leq n_p$. One thing to note is that since ``artificial'' sensitive attribute values are introduced, independent fairness constraints can be applied to the intersection of multiple protected groups. In the case of Example~\ref{eg:attribute}, one can apply a specific fairness constraint to ``Black Female'', distinct from those applied to ``Female'' and ``Black'' individually, as applying fairness constraints to ``Female'' and ``Black'' may not guarantee an adequate representation of ``Black Female'' candidates~\cite{yang2021causal}.

Similar to the previous problem formulation in~\cite{cai2025finding}, we also aim to find the fair weight vector in a subset $V$ of the weight vector space described by $l$ linear inequalities. For a given $V$, there may be several fair weight vectors in $V$ that satisfies the fairness constraints. To identify the most suitable one, we are also given an input unfair weight vector $w^o$ as a reference weight vector, and we aim to find a fair weight vector $w^{\scriptscriptstyle f}$, with the minimum disparity from $w^o$ for one of the two following objectives:

\begin{itemize}
    \item \textbf{$\boldsymbol{w}$ difference.} The disparity of a fair weight vector $w^{\scriptscriptstyle f}$ is measured by the $L_1$ distance between it and the given reference weight vector $w^o$. Formally, the objective $g^o$ is defined as
    \begin{equation}
        g^o(w^{\scriptscriptstyle f}) = ||w^{\scriptscriptstyle f} - w^o||_1 = \sum_{i=1}^d |w^{{\scriptscriptstyle f}}_i - w^o_i|.
    \end{equation}
    This objective can be regarded as the sum of the absolute differences between corresponding components of the weight vectors.
    \item \textbf{Utility loss.} By the principle in \cite{zehlike2022fairness}, the utility of a top-$k$ subset $\tau_k$ is computed by summing scores of all candidates in $\tau_k$ under the reference weight vector $w^o$. That is
    \begin{equation}
        U^o(\tau_k) = \sum_{c\in \tau_k} w^o \cdot p(c).
    \end{equation}
    The disparity of a fair weight vector $w^{\scriptscriptstyle f}$ is measured by the relative utility loss of its corresponding top-$k$ subset. Formally, the objective $g^o$ is defined as
     \begin{equation}
        g^o(w^{\scriptscriptstyle f}) = 1 - \frac{U^o(\tau^{\scriptscriptstyle f}_k)}{U^o(\tau^o_k)} =  1 - \frac{\sum_{c\in \tau^{\scriptscriptstyle f}_k} w^o \cdot p(c)}{\sum_{c\in \tau_k^o} w^o \cdot p(c)},
    \end{equation}
    where $\tau^{\scriptscriptstyle f}_k$ and $\tau^o_k$ are top-$k$ subsets of $w^{\scriptscriptstyle f}$ and $w^o$, respectively.
\end{itemize}
Note that for the utility loss objective, there may be multiple top-$k$ subsets for a $w^{\scriptscriptstyle f}$ due to ties in scores. The top-$k$ subset that satisfies the fairness constraints and minimizes the relative utility loss is used to determine the value of $g^o(w^{\scriptscriptstyle f})$. In fact, minimizing the utility loss is equivalent to maximizing the utility (under $w^o$), as $U^o(\tau^o_k)$ stays the same regardless of the tie-breaking outcome. These observations are illustrated in the following example.

\begin{eg}\label{eg:utility}
Consider a dataset of $5$ candidates with scoring attribute values being $\{(0.4, 0.7), (0.5, 0.6), \allowbreak (0.7, 0.35), (0.8, 0.2), (0.9, 0.9) \}$ (sensitive attribute values are omitted). Let $k = 2$ and suppose that the given reference weight vector is $w^o=(0.5, 0.5)$. Tie-breaking leads to two top-2 subsets: $\{(0.9, 0.9), (0.4, 0.7)\}$ and $\{(0.9, 0.9), (0.5, 0.6)\}$. Both top-2 subsets have the same utility $1.45$ under $w^o$. Now, suppose we obtain a fair vector $w^{\scriptscriptstyle f} = (0.6, 0.4)$, then we have two possible top-2 subsets due to ties: $\{(0.9, 0.9), (0.7, 0.35)\}$ and $\{(0.9, 0.9), (0.8, 0.2)\}$. Their utilities under $w^o$ are $1.425$ and $1.4$ respectively. Assuming both satisfy the fairness constraints, the former subset is used to determine the utility loss of $w^{\scriptscriptstyle f}$.
\end{eg}
One benefit of the utility loss objective is that it enables improvement in the stability of the resulting fair scoring function, which will be discussed in Section~\ref{subsec:theoretical-klevel}.

Formally, we define our problem as follows:

\begin{pro}[(Generalized) Fair Top-$k$ Selection]
    We are given a set of $n$ candidates with $d$ scoring attributes, one sensitive attribute, and $n_p$ protected groups. Each candidate, $c$, is represented by a point $p(c)$ in $\mathbb{R}^d$ and a subset $A(c) \subseteq \{\mathcal{G}_1,\mathcal{G}_2,\dots,\mathcal{G}_m\}$. We are also given a non-negative integer $k \leq n$, two sets of non-negative integers $\{ L_{k}^{\scriptscriptstyle \mathcal{G}_j} \mid 1\leq j \leq n_p \}$ and  $\{ U_{k}^{\scriptscriptstyle \mathcal{G}_j} \mid 1\leq j \leq n_p \}$, a subset $V$ of the weight vector space described by $l$ linear inequalities, and a reference weight vector $w^o$ with its corresponding a disparity measure $g^o: V \to \mathbb{R}$ ($w$ difference or utility loss). The goal is to find a real-valued weight vector $w =(w_1, w_2, \ldots, w_d) \in V$, where $w_i\geq 0$ for all $i$ and $||w||_1 = 1$, such that
    \begin{equation*}
        L_{k}^{\scriptscriptstyle \mathcal{G}_j} \leq | \tau_k^w \cap \mathcal{G}_j | \leq U_{k}^{\scriptscriptstyle \mathcal{G}_j},
    \end{equation*}
    for all $1 \leq j \leq n_p$, while also minimizing the value of $g^o(w)$.
\end{pro}

\section{Hardness}
In a previous work~\cite{cai2025finding}, the author considered the hardness of a special version of the problem formulated above, where $n_p = 1$ and the disparity minimization was not required. Those results also apply here: the problem is 3SUM-hard for $d\geq 3$, has a lower bound of $\Omega(n^{d-1})$ for $d\geq 3$ under a certain decision model, and is NP-hard in arbitrary dimensions. However, the analysis also revealed the following gaps in the hardness barrier:

\begin{itemize}
    \item \textbf{Low Dimensions Opportunity.} The problem is NP-hard in arbitrary dimensions but solvable in polynomial time for fixed dimensions, with upper and lower bounds becoming smaller as the dimensionality $d$ decreases. Notably, the previous hardness analysis does not cover the $d = 2$ case. 
    \item \textbf{Small $k$ Opportunity.} The hardness analysis does not apply to a sufficiently small $k$. Subsequent algorithmic results confirm that the lower bounds are indeed breakable in such cases (e.g. $k = O(\textnormal{\text{polylog}}(n))$).
\end{itemize}
Previous work~\cite{asudeh2019designing, liu2024fair} does not examine how the number of protected groups affects runtime efficiency, and implies that a larger number of protected groups may have only a limited impact. In this section, however, we show that with multiple protected groups, the above ``opportunities'' in the single-protected-group setting may vanish in the worst case. For the ``Low Dimensions Opportunity'', we show that even for $d=2$, the problem is NP-hard for a large $n_p$. For the ``Small $k$ Opportunity'', we show that for any constant $k\geq 2$, the problem has a (conditional) lower bound of $\Omega(n^{k - \delta})$ for any constant $\delta > 0$, even when $n_p = \alpha \log n$ where $\alpha$ is a constant. These results are established using the problem of verifying a fair weight vector, described as bellow.

\begin{pro}[Fair Top-$k$ Verification]
    We are given a set of $n$ candidates with $d$ scoring attributes, one sensitive attribute and $n_p$ protected groups. Each candidate, $c$, is represented by a point $p$ in $\mathbb{R}^d$ and a subset $A(c) \subseteq \{\mathcal{G}_1,\mathcal{G}_2,\dots,\mathcal{G}_m\}$. We are also given a non-negative integer $k \leq n$, two sets of non-negative integers $\{ L_{k}^{\scriptscriptstyle \mathcal{G}_j} \mid 1\leq j \leq n_p \}$ and  $\{ U_{k}^{\scriptscriptstyle \mathcal{G}_j} \mid 1\leq j \leq n_p \}$, and a weight vector $w$. The goal is to determine whether there exists a top-$k$ subset $\tau_k^w$ of $w$ such that
    \begin{equation*}
        L_{k}^{\scriptscriptstyle \mathcal{G}_j} \leq | \tau_k^w \cap \mathcal{G}_j | \leq U_{k}^{\scriptscriptstyle \mathcal{G}_j},
    \end{equation*}
    for all $1 \leq j \leq n_p$.
\end{pro}

At first glance, the problem appears trivial: one can use $w$ to compute a score for each candidate, apply a selection algorithm to identify the top-$k$ candidates, and then count the number of selected candidates of each protected group to verify whether the fairness constraints are satisfied. However, as noted in Example~\ref{eg:utility}, the top-$k$ subset may not be unique due to ties in scores. In such cases, it is no longer obvious how to efficiently determine whether there exists a top-$k$ subset satisfying the fairness constraints. The problem is computationally easier than the Fair Top-$k$ Selection problem, as an algorithm to the latter problem can be used to solve the former. In fact, this problem can be regarded as the Fair Top-$k$ Selection problem when $d=1$, treating the value $w\cdot p(c)$ as the single scoring attribute value. While the specific weight assigned to this attribute becomes irrelevant, the algorithm can effectively determine whether there exists a top-$k$ subset that satisfies the fairness constraints. Consequently, any lower bounds established for the Fair Top-$k$ Verification problem can be applied to the Fair Top-$k$ Selection problem.

Notably, the Fair Top-$k$ Verification problem also plays a crucial role in our experimental evaluation, specifically for input generation. Since we sample weight vectors uniformly at random until a target number of \textit{unfair} weight vectors are obtained, an efficient algorithm to determine whether a sampled weight vector is fair or not is essential for our experimental explorations (see Section~\ref{subsec:exp-setup}).

\subsection{Hardness result in low dimensions}
In this section, we consider the hardness of the problem in low dimensions, specifically for $d=2$. We establish the NP-hardness of the problem for an arbitrary $n_p$; that is, the number of protected groups can be arbitrarily large, as long as $n_p = O(n)$. We use a classical NP-hard problem, Set Cover, for the reduction, and consider an instance of the Fair Top-$k$ Verification problem where all candidates have the same scoring attribute values.

\begin{thm}\label{nphard}
    Fair Top-$k$ Verification is NP-hard for an arbitrary $n_p$.
\end{thm}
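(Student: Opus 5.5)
We reduce from Set Cover, as the text preceding the statement indicates. An instance of Set Cover consists of a universe $\mathcal{U} = \{u_1,\dots,u_N\}$, a family $\mathcal{S} = \{S_1,\dots,S_M\}$ of subsets of $\mathcal{U}$, and an integer $t$. The question is whether some $t$ sets from $\mathcal{S}$ cover $\mathcal{U}$. We may assume $t \le M$; otherwise the instance is trivial.

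\textbf{Construction.} We build an instance of Fair Top-$k$ Verification in which every candidate has the same point $p(c) = (1,1)$. Then every candidate has score $w\cdot p(c)=1$ under any $w$. Consequently, every $k$-subset of $\mathcal{C}$ is a top-$k$ subset, and the question becomes purely combinatorial: does some $k$-subset meet all the group constraints?

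We create one candidate $c_i$ for each set $S_i$, so $n = M$. For each element $u_j$ we create a protected group $\mathcal{G}_j$ and put $\mathcal{G}_j \in A(c_i)$ exactly when $u_j \in S_i$. This gives $n_p = N$ protected groups, and the constraints are
\[
L_k^{\mathcal{G}_j} = 1, \qquad U_k^{\mathcal{G}_j} = k \quad \text{for all } j,
\]
with $k = t$ and an arbitrary weight vector, say $w=(1/2,1/2)$. If one wants $n_p = O(n)$, we can pad with dummy candidates that belong to no group. These dummies are harmless because a cover with fewer than $t$ sets can be extended to exactly $t$ sets anyway. The construction clearly takes polynomial time.

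\textbf{Correctness.} Suppose $t$ sets cover $\mathcal{U}$. The corresponding $t$ candidates form a $k$-subset, which is a top-$k$ subset because all scores tie. Every $\mathcal{G}_j$ has at least one member in it, and the upper bound $k$ holds trivially.

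Conversely, suppose some top-$k$ subset $\tau$ satisfies all the constraints. It consists of $t$ candidates, and each $\mathcal{G}_j$ has at least one member in $\tau$. So the corresponding sets cover every $u_j$, and we have a cover of size $t$.

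If the instance lacks $t$ distinct sets, the padding handles it: we can pad with up to $t$ dummy candidates belonging to no group, in which case a cover of size at most $t$ corresponds exactly to a fair $k$-subset.

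\textbf{Expected difficulty.} The main point to handle carefully is the bookkeeping rather than any deep step. First, the notion of ``top-$k$ subset'' must make every $k$-subset valid under total ties, which the definition given in Section~\ref{sec:prelim} guarantees. Second, the dimension is fixed at $d=2$ while $n_p$ grows with the input, which shows that NP-hardness holds even in two dimensions. Finally, hardness transfers to Fair Top-$k$ Selection via the discussion preceding the theorem.
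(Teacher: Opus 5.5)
Your proposal is correct and uses essentially the same reduction as the paper: all candidates are tied at a common point in $\mathbb{R}^2$, there is one candidate per set and one protected group per universe element, and the bounds are $L_k^{\mathcal{G}_j}=1$, $U_k^{\mathcal{G}_j}=k$. The only differences are minor improvements. You fix $k=t$ and so get a direct many-one reduction from decision Set Cover, whereas the paper runs the verifier for every $1\le k\le n$ and then appeals to NP-hardness of the decision version. Your dummy padding also explicitly guarantees $n_p=O(n)$, which the paper states but does not enforce.
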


\begin{proof}
    For an arbitrary instance of the Set Cover problem, we are given a universe $\mathcal{U} = \{u_1, u_2, \ldots, u_{n_p}\}$ and a collection $\mathcal{S} = \{s_1, s_2, \ldots, s_n\}$ where each $s_\ell$ ($1 \leq \ell \leq n$) is a subset of $\mathcal{U}$. Our goal is to find the smallest subset of $\mathcal{S}$ whose union equals $\mathcal{U}$.

    Now, we construct an instance of Fair Top-$k$ Verification with the given instance of Set Cover. For each element $u_j$ in $U$, a group $\mathcal{G}_j$ is created that is also a protected group in the Fair Top-$k$ Verification problem. Then, for each subset $s_\ell$ in $S$ such that $s_\ell = \{u_x, u_y, u_z,\ldots \}$, we create a candidate $c$ with $A(c) = \{\mathcal{G}_x, \mathcal{G}_y, \mathcal{G}_z, \ldots \}$. These give us a set of $n$ candidates. The values of the scoring attributes are assigned to be the same across different candidates (e.g., $p(c) = (0, 0)$ for all $c$). Now we have an instance of the Fair Top-$k$ Verification problem in 2-D, with $n$ candidates and $n_p$ groups, and all groups are protected groups. To solve the given instance of the Set Cover problem, we set $L_{k}^{\scriptscriptstyle \mathcal{G}_j} = 1$ and $U_{k}^{\scriptscriptstyle \mathcal{G}_j} = k$ for all $1 \leq j \leq n_p$, and run the algorithm for the Fair Top-$k$ Verification problem with an arbitrary $w$ (e.g., $w = (0.5, 0.5)$) for all $1\leq k\leq n$. The smallest $k$ that successfully verifies the fairness of the weight vector is reported. Clearly, this reduction takes polynomial time.

    To show the correctness this reduction, note that for the given fairness constraints, if one successfully verifies the fairness of the given weight vector for a specific $k$ value, we will be able to derive a top-$k$ subset $\tau_k^w$ of the size $k$ such that each protected group, $\mathcal{G}_j$, must be covered by at least one $A(c)$ of $c\in \tau_k^w$. Moreover, consider the one-to-one correspondence between an element $u_j$ in the universe and a protected group $\mathcal{G}_j$, and between a subset $s_\ell$ of $S$ and a candidate $c$. A sub-collection of $S$ of size $k$ whose union covers all elements in $\mathcal{U}$ can be easily derived from $\tau_k^w$. In such cases, the smallest sub-collection of $S$ which covers $\mathcal{U}$ corresponds to a solution to the Fair Top-$k$ Selection problem with the smallest possible $k$.
    
    There is one subtlety that we need to discuss. An algorithm for the Fair Top-$k$ Verification problem can only confirm that a weight vector is fair if there exists a $\tau_k^w$ satisfying the fairness constraints. However, for this input $w$, it is not obvious how to derive a solution of the Set Cover problem as the $\tau_k^w$ satisfying the fairness constraints is still unknown. Fortunately, the decision version of the Set Cover problem, which only considers the size of the smallest sub-collection of $S$ that covers $U$, is known to be NP-hard, which is enough for our purpose.
\end{proof}

\begin{rmk}
    The previous results~\cite{cai2025finding} establish that the problem is solvable in polynomial time for any constant $d$ when $n_p = 1$. In contrast, our reduction shows that for an arbitrary $n_p$, the problem is NP-hard even for $d=2$. Consequently, a polynomial-time algorithm for general problem instances is unlikely to exist, even in constant dimensions.
\end{rmk}

\subsection{Hardness results for small $k$}
Our NP-hardness result considers the case where $k$ can be large. By the ``Small $k$ Opportunity'', it was shown in~\cite{cai2025finding} that if $k$ is sufficiently small, some lower bounds are breakable. In fact, if $k$ is a constant, there is a naive polynomial-time algorithm for the Fair Top-$k$ Verification problem. It runs in $O(n^k\cdot n_p)$ time by testing each possible combination of $k$ candidates, which works for the worst case where all candidates have the same scoring attribute values. Moreover, our NP-hardness result only applies to the case where the number of groups is large. In practice, however, the total number of groups is small in most datasets. In this section, we show that for a constant $k\geq 2$ and a moderate value of $n_p$ (i.e., $n_p = \alpha \log n$ for a constant $\alpha$), the problem may not admit a better (than the naive) algorithm achieving practically efficiency (i.e., an algorithm with a run time of $O(n\cdot \text{polylog}(n))$), based on hardness assumptions in the theory of fine-grained complexity \cite{williams2018some}.

To begin with, let us introduce two problems, Orthogonal Vectors (OV) and its generalization $t$-OV (often referred to as $k$-OV in the literature~\cite{williams2018some}), whose hardness assumptions play key roles in fine-grained complexity and which we will use in our reductions. 

The OV problem is defined as follows: Let $n_p = \alpha \log n$. Given two sets $A,B\subseteq \{0,1\}^{n_p}$ with $|A| = |B| = n$, the goal is to determine whether there exist $a \in A, b\in B$ so that $a \cdot b = 0$ where $a \cdot b = \sum_{i=1}^{n_p} a_i\cdot b_i$. For such a problem, we have the following conjecture~\cite{williams2005new, williams2018some}:

\begin{hyp}[OV Hypothesis]
    For every constant $\delta > 0$, there is a constant $\alpha > 0$ such that OV cannot be solved in $O(n^{2-\delta})$ time on instances with $n_p = \alpha \log n$.
\end{hyp}

An extension of the OV problem, $t$-OV problem, is defined as follows: Let $t$ be a constant and $n_p = \alpha \log n$. Given $t$ sets $A_1,A_2, \ldots, A_t\subseteq \{0,1\}^{n_p}$ with $|A_i| = n$ for all $i$, the goal is to determine whether there exist $a^1 \in A_1, a^2 \in A_2, \ldots a^t\in A_t$ so that $a^1 \cdot a^2 \cdot \ldots \cdot a^t = 0$ where $a^1 \cdot a^2 \cdot \ldots \cdot a^t = \sum_{i=1}^{n_p} \prod_{\ell=1}^t a_i^\ell$. For such a problem, we have the conjecture as follows:

\begin{hyp}[$t$-OV Hypothesis]
    For every constant $\delta > 0$, there is a constant $\alpha > 0$ such that $t$-OV cannot be solved in $O(n^{t-\delta})$ time on instances with $n_p = \alpha \log n$.
\end{hyp}

The OV and $t$-OV Hypotheses are popular due to their connections to the strong exponential time hypothesis (SETH) \cite{impagliazzo2001complexity}. These connections were established in \cite{williams2005new}.

With the two hypotheses introduced, we begin by establishing the following (conditional) lower bound based on the OV hypothesis.

\begin{thm}\label{thm:ov}
    For every constant $\delta > 0$, there is a constant $\alpha > 0$ such that Fair Top-$k$ Verification cannot be solved in $O(n^{2-\delta})$ time for $n_p = \alpha\log n$, assuming the OV hypothesis.
\end{thm}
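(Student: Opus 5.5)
We reduce OV to Fair Top-$k$ Verification with $k=2$, in the spirit of the Set Cover reduction of Theorem~\ref{nphard}. All candidates receive identical scoring attribute values, so every $2$-subset is a top-$2$ subset. The task then becomes purely combinatorial: decide whether some pair of candidates satisfies all the count constraints. The group structure must encode two things: the orthogonality condition, and the requirement that the chosen pair consists of exactly one vector from $A$ and one from $B$.

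\textbf{Construction.} Take an OV instance $A,B\subseteq\{0,1\}^{n_p'}$ with $n_p'=\alpha'\log n$. Create one candidate $c_a$ for each $a\in A$ and one candidate $c_b$ for each $b\in B$, so there are $2n$ candidates, all with $p(c)=(0,0)$. For each coordinate $i$, introduce a protected group $\mathcal{G}_i$ with the following membership rule: $c_a\in\mathcal{G}_i$ iff $a_i=1$, and $c_b\in\mathcal{G}_i$ iff $b_i=1$. Set $L^{\mathcal{G}_i}_2=0$ and $U^{\mathcal{G}_i}_2=1$. Then a pair $\{c_a,c_b\}$ violates the constraint for $\mathcal{G}_i$ exactly when $a_i=b_i=1$. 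To force one candidate from each side, add a protected group $\mathcal{G}_A$ containing all $c_a$, with $L^{\mathcal{G}_A}_2=U^{\mathcal{G}_A}_2=1$. That gives $n_p=n_p'+1$ groups. Set $k=2$ and pick any $w$, e.g.\ $(0.5,0.5)$.

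\textbf{Correctness.} Every $2$-subset is a valid top-$2$ subset because all scores tie. The $\mathcal{G}_A$ constraint forces the subset to have the form $\{c_a,c_b\}$. The coordinate constraints hold iff no $i$ has $a_i=b_i=1$, that is, iff $a\cdot b=0$. So the verification instance is a yes-instance iff the OV instance is.

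\textbf{Parameter bookkeeping.} This step needs care because the number of candidates changes. The new instance has $N=2n$ candidates and $n_p=\alpha'\log n+1$. Fix $\delta>0$, and let $\alpha'$ be the constant the OV hypothesis provides for this $\delta$. Set $\alpha=2\alpha'$. For large $n$, $\alpha'\log n+1\le\alpha\log N$, and we can pad with dummy protected groups that contain no candidates and have bounds $0$ and $k$ until $n_p$ equals $\lfloor\alpha\log N\rfloor$ exactly. If Fair Top-$k$ Verification could be solved in $O(N^{2-\delta})$ time, then, since the reduction itself costs only $O(n\log n)$, OV would be solvable in $O((2n)^{2-\delta})=O(n^{2-\delta})$ time on instances with $n_p'=\alpha'\log n$, contradicting the OV hypothesis.

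The main obstacle is not any single step of this argument. It is matching quantifiers: for each $\delta$ we must exhibit $\alpha$, and the construction above does this by reusing the hypothesis's $\alpha'$ with only constant-factor changes.
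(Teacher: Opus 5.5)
Your proof is correct and takes essentially the same approach as the paper: a $k=2$ reduction from OV with all candidates tied, one protected group per coordinate with bounds $[0,1]$, and side-indicator groups that force exactly one candidate from each of $A$ and $B$. The differences are cosmetic. You use a single group $\mathcal{G}_A$ with $L=U=1$ where the paper uses two groups $\mathcal{G}_{n_p+1},\mathcal{G}_{n_p+2}$ with lower bound $1$, and your explicit handling of $N=2n$, the choice $\alpha=2\alpha'$, and dummy-group padding is more careful than the paper's parameter bookkeeping.
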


\begin{proof}
    Given the two sets $A,B\subseteq \{0,1\}^{n_p}$, we first create a set of groups for our instance of the Fair Top-$k$ Verification problem, denoted as $\{\mathcal{G}_1,\mathcal{G}_2,\ldots,\mathcal{G}_{n_p}, \mathcal{G}_{n_p+1}, \mathcal{G}_{n_p+2}\}$. All groups are protected groups in our instance. For each element $a$ in $A$ such that $a = (a_1, a_2, \dots, a_{n_p})$, we create a candidate $c$ with $A(c) = \{a_1 \cdot \mathcal{G}_1, a_2 \cdot \mathcal{G}_2, \ldots, a_{n_p} \cdot \mathcal{G}_{n_p}, \mathcal{G}_{n_p + 1}\}$, where $a_i \cdot \mathcal{G}_i$ means $\mathcal{G}_i$ is an element of $A(c)$ if $a_i = 1$ and is not an element of $A(c)$ if $a_i = 0$. Similarly, for each element $b$ in $B$ such that $b = (b_1, b_2, \dots, b_{n_p})$, we create a candidate $c$ with $A(c) = \{b_1 \cdot \mathcal{G}_1, b_2 \cdot \mathcal{G}_2, \ldots, b_{n_p} \cdot \mathcal{G}_{n_p}, \mathcal{G}_{n_p + 2}\}$. Again, as was done in the proof of Theorem~\ref{nphard}, the values of scoring attributes are assigned to be the same across different candidates (e.g., $p(c) = (0, 0)$ for all $c$). Then, we set $L_{k}^{\scriptscriptstyle \mathcal{G}_j} = 0$ and $U_{k}^{\scriptscriptstyle \mathcal{G}_j} = 1$ for all $1 \leq j \leq n_p$, with $L_{k}^{\scriptscriptstyle \mathcal{G}_{n_p + 1}} = 1$ and $L_{k}^{\scriptscriptstyle \mathcal{G}_{n_p + 2}} = 1$ while $U_{k}^{\scriptscriptstyle \mathcal{G}_{n_p + 1}}\geq 1$ and $U_{k}^{\scriptscriptstyle \mathcal{G}_{n_p + 2}}\geq 1$ are set arbitrarily. Finally, we set $k=2$ and take an arbitrary input weight vector (e.g. $w = (0.5, 0.5)$). This gives us an instance of the Fair Top-$k$ Verification problem with $2n$ candidates and $n_p+2$ protected groups. Clearly, the reduction takes $O(n\log n)$ time with $n_p=O(\log n)$, as the time is bounded by $O(n\cdot n_p)$ for creating candidates.

    To show the correctness of this reduction, note that if one successfully verifies the fairness of the weight vector for such an instance of the Fair Top-$k$ Selection problem (with $k=2$), one must also be able to derive a set of two candidates, $c_1$ and $c_2$, such that $A(c_1)\cap A(c_2) = \emptyset$. Moreover,  $\mathcal{G}_{n_p+1}$ and $\mathcal{G}_{n_p+2}$ must be an element of either $A(c_1)$ or $A(c_2)$. Assume without the loss of the generality that $\mathcal{G}_{n_p+1} \in A(c_1)$ and $\mathcal{G}_{n_p+2} \in A(c_2)$. By our construction, this means that $c_1$ corresponds to an element in $A$ and $c_2$ corresponds to an element in $B$. Next, one can recover the corresponding element $a \in A$ by $A(c_1)$ by setting $a_i = 1$ if $\mathcal{G}_i \in A(c_1)$ and $a_i = 0$ otherwise. The element $b \in B$ can also be recovered in the same way. Given $A(c_1)\cap A(c_2) = \emptyset$, we now have $a \cdot b = 0$ as $a_i \cdot b_i = 0$ for all $1 \leq i \leq n_p$.

    Conversely, if the algorithm reports a failure, that means there does not exist $a \in A, b\in B$ such that $a \cdot b = 0$, as each candidate corresponds exactly one element in either $A$ or $B$ by our construction. This completes the proof.
\end{proof}

Moreover, using a similar approach to the one in Theorem~\ref{thm:ov}, we are able to show a stronger lower bound based on the $t$-OV hypothesis.

\begin{col}\label{col:kov}
    For every constant $\delta > 0$, there is a constant $\alpha > 0$ such that Fair Top-$k$ Verification cannot be solved in $O(n^{k-\delta})$ time for $n_p = \alpha\log n$, assuming the $t$-OV hypothesis.
\end{col}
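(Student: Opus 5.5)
We generalize the reduction in the proof of Theorem~\ref{thm:ov} from $k=2$ to an arbitrary constant $k=t\geq 2$. Fix $t=k$ and take an instance of $t$-OV with sets $A_1,\ldots,A_t\subseteq\{0,1\}^{n_p}$, each of size $n$, where $n_p=\alpha\log n$. We build a Fair Top-$k$ Verification instance with $tn$ candidates and $n_p+t$ protected groups $\mathcal{G}_1,\ldots,\mathcal{G}_{n_p},\mathcal{G}_{n_p+1},\ldots,\mathcal{G}_{n_p+t}$.

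\textbf{Construction.} For each $a\in A_\ell$, create a candidate $c$ with $A(c)=\{a_i\cdot\mathcal{G}_i : 1\le i\le n_p\}\cup\{\mathcal{G}_{n_p+\ell}\}$. All candidates receive identical scoring attributes, e.g.\ $p(c)=(0,0)$, so every $k$-subset is a top-$k$ subset for any $w$. Set $k=t$.

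\textbf{Fairness constraints.} The pairwise condition $A(c_1)\cap A(c_2)=\emptyset$ used in Theorem~\ref{thm:ov} does not carry over. For $t\ge 3$, $t$-OV asks only that no coordinate is $1$ in \emph{all} $t$ chosen vectors; a coordinate may be shared by up to $t-1$ of them. So set $L_k^{\mathcal{G}_j}=0$ and $U_k^{\mathcal{G}_j}=t-1$ for $1\le j\le n_p$. For the label groups, set $L_k^{\mathcal{G}_{n_p+\ell}}=1$ and $U_k^{\mathcal{G}_{n_p+\ell}}=1$ for $1\le\ell\le t$. The groups now number $n_p+t=O(\log n)$, which can be absorbed into a slightly larger constant $\alpha'$ since $t$ is constant. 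The reduction runs in $O(tn\cdot n_p)=O(n\log n)$ time.

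\textbf{Correctness.} Since $k=t$ and each of the $t$ label groups must appear exactly once (at least once suffices, by counting), a fair top-$k$ subset consists of exactly one candidate from each $A_\ell$. Let the chosen candidates correspond to $a^1,\ldots,a^t$. Then $|\tau\cap\mathcal{G}_i|=\sum_\ell a^\ell_i\le t-1$ holds for every $i$ if and only if no coordinate has all $a^\ell_i=1$, i.e.\ if and only if $\sum_i\prod_\ell a^\ell_i=0$. Both directions follow, as in Theorem~\ref{thm:ov}.

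\textbf{Conclusion.} An $O(n^{k-\delta})$ algorithm for verification on $tn$ candidates would therefore solve $t$-OV in $O((tn)^{t-\delta})=O(n^{t-\delta})$ time, contradicting the $t$-OV Hypothesis.

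The main obstacle is handling the constants. We must check that the input size $tn$ and the group count $n_p+t$ stay within the form $\alpha\log N$ for the new instance size $N=tn$, choosing $\alpha$ appropriately. We must also correctly identify the upper bound $t-1$ rather than $1$, which is the only genuinely new ingredient compared with Theorem~\ref{thm:ov}.
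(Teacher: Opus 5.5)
Your proposal is correct and is essentially the paper's reduction: one candidate per vector of $A_\ell$ tagged with a label group $\mathcal{G}_{n_p+\ell}$, coordinate groups capped at $U=k-1$, each label group required at least once, and identical scoring attributes so that every $k$-subset is a top-$k$ subset. The only differences are cosmetic: you fix $U=1$ on the label groups where the paper leaves it arbitrary (pigeonhole makes the two equivalent), and you are somewhat more explicit than the paper about the $tn$ candidate count and about absorbing the extra $t$ groups into $\alpha$.
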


\begin{proof}
    Given an arbitrary instance of the $t$-OV problem, we set $t=k$. For the $k$ sets $A_1,A_2, \ldots, A_k\subseteq \{0,1\}^{n_p}$, we first create a set of groups for our instance of the Fair Top-$k$ Selection problem, denoted as $\{\mathcal{G}_1,\mathcal{G}_2,\dots,\mathcal{G}_{n_p}, \mathcal{G}_{n_p + 1}, \mathcal{G}_{n_p + 2},\ldots, \mathcal{G}_{n_p + k} \}$, where all groups are protected. For each element $a$ in $A_\ell$ such that $a = (a_1, a_2, \dots, a_{n_p})$, we create a candidate $c$ with $A(c) = \{a_1 \cdot \mathcal{G}_1, a_2 \cdot \mathcal{G}_2, \ldots, a_{n_p} \cdot \mathcal{G}_{n_p}, \mathcal{G}_{n_p+\ell}\}$. Then, we set $L_{k}^{\scriptscriptstyle \mathcal{G}_j} = 0$ and $U_{k}^{\scriptscriptstyle \mathcal{G}_j} = k-1$ for all $1 \leq j \leq n_p$, as well as $L_{k}^{\scriptscriptstyle \mathcal{G}_{j+\ell}} = 1$ for all $1 \leq \ell \leq k$. Finally, similar to the proof of Theorem~\ref{thm:ov}, $U_{k}^{\scriptscriptstyle \mathcal{G}_{n_p + \ell}}\geq 1$ are arbitrarily set for all $1 \leq \ell \leq k$, values of scoring attributes are assigned to be the same across different candidates, and an arbitrary weight vector is used as an input. Clearly, the reduction takes $O(nk\cdot n_p) = O(n\log n)$ time, as $k = O(1)$ and $n_p = O(\log n)$. Note that in our reduction, there are $n_p+k$ protected groups rather than $n_p$, but since $k=O(1)$, the extra $k$ protected groups can be absorbed by choosing a slightly larger $\alpha$.

    To show the correctness of this reduction, note that if one successfully verifies the fairness of the weight vector for such an instance of the Fair Top-$k$ Verification problem, one must be able to derive a set of $k$ candidates, $C'$, such that for all $1\leq j\leq n_p$, $\mathcal{G}_j$ can only be an element of $A(c)$ of at most $k-1$ candidates in $C'$. Moreover, for each $1\leq \ell \leq k$, $\mathcal{G}_{n_p+\ell}$ can only be an element of $A(c)$ of exactly one candidate in $C'$, as the $t$-OV problem requires selecting exactly one binary vector from each binary vector set. By the same method applied in Theorem~\ref{thm:ov}, one can recover $k$ elements $a^1 \in A_1, a^2 \in A_2, \ldots a^k, \in A_k$ such that $a^1 \cdot a^2 \cdot \ldots \cdot a^k = 0$. This is because for all $1\leq i\leq n_p$, we have $\prod_{\ell=1}^k a_i^\ell = 0$ as there are at most $k-1$ $a_i^\ell = 1$ among all possible $1 \leq \ell \leq k$. Conversely, if the algorithm reports a failure, there must not exist such a set of $k$ binary vectors, given the one-to-one correspondence between a candidate and an element in the given $k$ sets $A_1,A_2, \ldots, A_k$.
\end{proof}

\begin{rmk}\label{rmk:kov}
    For a constant $k\geq 2$ and $n_p = \alpha \log n$, the naive algorithm takes $O(n^k\cdot \log n)$ time. Our reduction shows that in such cases, significant improvements over the naive approach are unlikely. Consequently, the Fair Top-$k$ Selection problem may not admit practically efficient algorithms (i.e., an algorithm with a run time of $O(n\cdot \text{polylog}(n))$) for such cases. However, notice that the reduction only works for $n_p$ being large enough such that the input sets contain distinct binary vectors. This observation, in fact, implies a more efficient algorithm when $n_p$ is sufficiently small.
\end{rmk}

\section{Small $k$ opportunity revisited}
We have shown that the Fair Top-$k$ Selection problem may not admit efficient algorithms, even for a constant $k \geq 2$ and moderate values of $n_p$ ($n_p = \Theta(\log n)$). However, Remark~\ref{rmk:kov} implies that when $n_p$ is sufficiently small, the lower bound may be breakable. In fact, in a previous work~\cite{cai2025finding}, it is established that when $n_p = 1$ and $k = O(\text{polylog}(n)))$, the problem is solvable in $O(n\cdot \text{polylog}(n))$ time for $d = 2,\,3$, confirming the ``Small $k$ Opportunity''. In this section, we further show that when $n_p$ and $k$ is sufficiently small, that is, $n_p = O(1)$ and $k = O(\text{polylog}(n)))$, the Fair Top-$k$ Verification problem can be solved in linear time for $d = O(1)$. This result breaks the previous NP-hardness and $\Omega(n^{k-\delta})$ lower bounds for these cases. By leveraging the key insights of the algorithm, we largely recover the results from the ``Small $k$ Opportunity'' for the Fair Top-$k$ Selection problem, with only a polylogarithmic overhead.

\subsection{The case of small $n_p$ (and $k$)}\label{subsec:small-np}
Before describing the algorithm in detail, we introduce its main idea. Observe that the hardness barrier for the Fair Top-$k$ Verification problem stems from ties in scores, which makes a naive enumeration algorithm near-optimal when $n_p = \Theta(\log n)$. To break this barrier, we leverage the key observation that when $n_p = O(1)$, many candidates become indistinguishable with respect to their \textit{protected groups membership profile}, which represents all protected groups to which a candidate belongs. These candidates are ``exchangeable'' for a top-$k$ subset that satisfies the fairness constraints, as swapping a selected candidate with an unselected one of the same protected groups membership profile does not change the protected group member counts of the top-$k$ subset. Thus, potentially many top-$k$ subsets are equivalent for the fairness test, so they can be bypassed for runtime improvements.

Now, let us proceed to the algorithm. First, an arbitrary top-$k$ subset is identified using a selection algorithm, with the score of each candidate under the given $w$ being computed. Let $\tau_k$ denotes this top-$k$ subset, and let $c_k \in \tau_k$ be the candidate with the lowest score in $\tau_k$, that is, the $k$th candidate under $w$ in the candidate set $C$. Next, two sets, $M_1\subseteq \tau_k$ and $M_2 \subseteq C \setminus \tau_k$, are identified such that every candidate $c$ in $M_1$ or $M_2$ has the same score as $c_k$ (i.e., $w \cdot p(c) = w \cdot p(c_k)$ for all $c \in M_1$ and $c \in M_2$). Notably, for every candidate in $\tau_k \setminus M_1$, their scores are strictly higher than that of $c_k$ so that they must appear in the top-$k$ subset regardless of the tie-breaking outcome. Therefore, we initialize the protected group member counts by $\tau_k \setminus M_1$.

Since there are $k - |M_1|$ candidates that are determined, our next step is to select $t = |M_1|$ candidates among $M_1 \cup M_2$ such that the fairness constraints are satisfied. Recall our observation that two candidates with the same protected groups membership profile are equivalent for a fair top-$k$ subset. With $n_p$ protected groups, there are $2^{n_{\scriptscriptstyle p}}$ distinct protected groups membership profiles. We encode each membership profile using a binary vector converted into an integer, denoted by the function $I(\cdot)$. For example, assuming $n_p = 5$, a candidate with membership profile $A(c) = \{\mathcal{G}_1,\mathcal{G}_3,\mathcal{G}_5 \}$ can be encoded into a binary vector $10101$, which is further converted as $I(A(c)) = 21$. Specially, if a candidate does not belong to any protected group, $I(A(c)) = 0$. The inverse of this operation, which coverts an integer to a protected groups member profile, is denoted as $I^{-1}(\cdot)$. In fact, a similar operation that converts a binary vector into a membership profile is used in the proof of Theorem~\ref{thm:ov}. Since candidates with the same protected groups membership profile are equivalent for the fairness test, only the number of candidates selected from each membership profile matters. Let $m_j$ denote the number of candidates with $I(A(c)) = j$, for a top-$k$ subset, one must have values of $n_j$ such that $\sum_{j=0}^{2^{n_p} - 1}m_j = t$. All valid value assignments to $m_j$ whose sum is $t$ can be enumerated using a backtracking algorithm.

Algorithm~\ref{alg:fairtopkver} provides the pseudocode for our algorithm, while Algorithm~\ref{alg:backtrack} provides the pseudocode for the backtrack subroutine. We use a dictionary $N$ to maintain the member counts for each protected group during the tree search. The array $S$ stores the total number of candidates for each protected groups membership profile, while $S'$ maintains the number of selected candidates for each profile during the search. Whenever the tree search reaches a leaf node, that is, when exactly $t$ candidates are selected, we use $N$ to test against the input fairness constraints $L=\{ L_{k}^{\scriptscriptstyle \mathcal{G}_j} \mid 1\leq j \leq n_p \}$ and $U=\{ U_{k}^{\scriptscriptstyle \mathcal{G}_j} \mid 1\leq j \leq n_p \}$. The tree search backtracks whenever the leaf node does not yield a top-$k$ subset satisfying the fairness constraint, or the number of candidates to be selected for a membership profile exceeds the limit specified in $S$.

\begin{algorithm}[tb]
    \caption{Algorithm for Fair Top-$k$ Verification}\label{alg:fairtopkver}
    \begin{algorithmic}[1]
        \Procedure{FairTopKVerification}{$C$, $k$, $n_p$, $L$, $U$, $w$}
        \State $\tau_k \gets$ \Call{Select}{$C$, $k$, $w$}
        \State $c_k\gets $ the $k$th candidate under $w$
        \State $M_1 \gets \{c \mid c\in \tau_k \text{ and } w\cdot p(c) = w\cdot p(c_k) \}$
        \State $M_2 \gets \{c \mid c\in C \setminus \tau_k \text{ and } w\cdot p(c) = w\cdot p(c_k) \}$
        \State Let $N$ be a dictionary storing protected group member counts, initialized to $0$
        \For {$c \in \tau_k \setminus M_1$}
            \For {$g \in A(c)$}
            \State {\bf if} $g$ is protected {\bf then} $N[g] \gets N[g] + 1$
        \EndFor
        \EndFor
        \State Let $S$ be an integer array of size $2^{n_p}$, initialized to $0$
        \For {$c \in M_1 \cup M_2$}
        \State $S[I(A(c))] \gets S[I(A(c))] + 1$
        \EndFor
        \State \Return \Call{BacktrackTieBreaking}{$|M_1|$, $n_p$, $L$, $U$, $N$, $S$}
        \EndProcedure
    \end{algorithmic}
\end{algorithm}

\begin{algorithm}[tb]
    \caption{Backtracking algorithm for tie-breaking}\label{alg:backtrack}
    \begin{algorithmic}[1]
        \Procedure{BacktrackTieBreaking}{$t$, $n_p$, $L$, $U$, $N$, $S$}
        \State Let $T$ be a stack to store pairs of integers, initialized to empty
        \State Let $S'$ be an array of size $2^{n_p}$, initialized to $-1$
        \State $T.$\Call{Push}{$\{t, 0 \}$}
        \While {$T$ is not empty}
        \State $\{r, i\} \gets T.$\Call{Top}{}()
        \If {$i \leq 2^{n_p} - 1$}
        \If {$S'[i] < S[i]$}
        \State $s'\gets $\Call{Max}{$S'[i]$, $0$}; $S'[i]\gets S'[i] + 1$; $r\gets r - (S'[i] - s')$
         \For {$g \in I^{-1}(i)$}
            \State $N[g] \gets N[g] + (S'[i] - s')$
        \EndFor
        \Else
        \State \Call{Revert}{$N$, $S'$}; $S'[i] \gets -1$; $T.$\Call{Pop}{}() \Comment{Restore to the previous state}
        \State {\bf continue}
        \EndIf
        \Else {} {\bf continue}
        \EndIf

        \If {$r = 0$}
        \If {\Call{IsFair}{$L$, $U$, $N$}}
        \State \Return True
        \Else 
        \State \Call{Revert}{$N$, $S'$}; $S'[i] \gets -1$; $T.$\Call{Pop}{}() \Comment{Restore to the previous state}
        \EndIf
        \Else {} $T.$\Call{Push}{$\{r, i + 1 \}$}
        \EndIf
        \EndWhile
        \State \Return False
        \EndProcedure
    \end{algorithmic}
\end{algorithm}

For the runtime analysis of the algorithm, it is easy to show the following: 
\begin{thm}\label{thm:ver}
    With $n_p = O(1)$ and a sufficiently small $k$ (e.g., $k = \text{polylog}(n)$), Fair Top-$k$ Verification can be solved in $O(n \cdot d)$ time.
\end{thm}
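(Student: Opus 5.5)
We prove Theorem~\ref{thm:ver} in two parts: correctness of Algorithm~\ref{alg:fairtopkver}, then its running time.

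\textbf{Correctness.} Every candidate in $\tau_k \setminus M_1$ scores strictly more than $c_k$. Every candidate outside $\tau_k \cup M_2$ scores strictly less. Hence every top-$k$ subset has the form $(\tau_k\setminus M_1)\cup X$ with $X\subseteq M_1\cup M_2$ and $|X|=|M_1|=t$. Conversely, every such set is a top-$k$ subset.

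Only the counts $m_j$ of selected candidates with $I(A(c))=j$ affect $|\tau\cap\mathcal{G}_i|$. So feasibility reduces to whether some vector $(m_0,\dots,m_{2^{n_p}-1})$ with $0\le m_j\le S[j]$ and $\sum_j m_j=t$ makes $N$ fair. Algorithm~\ref{alg:backtrack} enumerates exactly these vectors, so it returns True if and only if a fair top-$k$ subset exists.

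\textbf{Running time.} Computing the $n$ scores takes $O(nd)$. Linear-time selection, finding $c_k$, and building $M_1$, $M_2$ and $S$ take $O(n)$. Initializing $N$ costs $O(k\cdot n_p)$, and computing the indices $I(A(c))$ costs $O(n\cdot n_p)=O(n)$.

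Write $D=2^{n_p}=O(1)$. The backtracking search visits at most the number of compositions of at most $t$ into $D$ parts, which is $\binom{t+D}{D}=O(t^{D})$ leaves and nodes. Here $t\le k$. Each step updates or reverts $N$ in $O(n_p)$ time and runs \textsc{IsFair} in $O(n_p)$ time. So the search costs $O(k^{2^{n_p}}\cdot n_p)$.

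The step that needs the most care is bounding this term by $O(n)$, since $k^{2^{n_p}}$ is only polylogarithmic. With $k=\text{polylog}(n)$ and $n_p$ constant, $k^{2^{n_p}}=\text{polylog}(n)=o(n)$, so it is absorbed. This is why the statement assumes "sufficiently small $k$": any $k$ with $k^{2^{n_p}}=O(n)$ suffices.

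Adding the parts gives a total of $O(nd)$.
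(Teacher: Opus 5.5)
Your proposal is correct and follows essentially the same route as the paper. The $O(nd)$ score computation and selection dominate, while the backtracking search over the $2^{n_p}$ membership profiles has $O(t^{2^{n_p}})$ nodes with $O(n_p)$ work each, which is polylogarithmic and hence $o(n)$ when $k=\text{polylog}(n)$ and $n_p=O(1)$.

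Your explicit characterization of every top-$k$ subset as $(\tau_k\setminus M_1)\cup X$ with $X\subseteq M_1\cup M_2$ and $|X|=t$ adds a correctness argument that the paper's proof leaves to the algorithm description. One small nit: $\binom{t+D}{D}$ counts only full-depth assignments, not all tree nodes, but summing over depths still gives $O(t^{D})$.
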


\begin{proof}
    Our algorithm starts by using a selection algorithm to obtain an arbitrary top-$k$ subset, which takes $O(n \cdot d)$ time. In the backtracking subroutine, each node of the search tree has at most $t + 1$ children, as there are at most $t$ candidates to be selected. The depth of the search tree is bounded by $2^{n_p}$, as each level corresponds to a unique protected groups membership profile. 
    
    At each node, updating $N$ takes $O(n_p)$ time, and if the node is a leaf, checking the fairness constraints also takes $O(n_p)$. Since $n_p = O(1)$, these are constant-time operations. Therefore, the total runtime of the backtracking subroutine is $O(t^{2^{n_p}})$. Given that $n_p = O(1)$, $2^{n_p}$ is also a constant. Since $t \leq k$, for a sufficiently small $k$ (e.g., $k = \text{polylog}(n)$), the term $O(t^{2^{n_p}})$ is $o(n\cdot d)$. (E.g., when $k = \text{polylog}(n)$, we have $O(t^{2^{n_p}}) = O(\text{polylog}(n)) = o(n\cdot d)$ since $t\leq k$ and $2^{n_p} = O(1)$.) As all other operations take $O(n)$ time (with $n_p = O(1)$), the initial selection step dominates the computational cost, resulting in an overall time complexity of $O(n \cdot d)$.
\end{proof}

\begin{rmk}\label{rmk:backtrack}
    Given that the runtime of the backtracking subroutine is $O(t^{2^{n_p}})$, our algorithm may remain efficient even for large $k$, provided that $t$ is sufficiently small. Furthermore, Algorithm~\ref{alg:backtrack} can be optimized by excluding protected groups membership profiles that contain zero candidates. However, since the main purpose of this algorithm is to demonstrate the key component for breaking the lower bound, we defer the discussion of this optimization and other heuristics for runtime improvements to Section~\ref{subsec:backtrack}.
\end{rmk}

\subsection{Augmenting the $k$-level-based algorithm}\label{subsec:theoretical-klevel}
Having developed an efficient algorithm for the Fair Top-$k$ Verification problem when both $k$ and $n_p$ are small, we now turn to the Fair Top-$k$ Selection problem. In the case of multiple protected groups, the backtracking subroutine is utilized to address the case of ties. Furthermore, we show how to augment the theoretically efficient $k$-level-based algorithm~\cite{cai2025finding} to minimize the $w$ difference and utility loss.

\paragraph{Multiple protected groups.}
In~\cite{cai2025finding}, the theoretical efficient $k$-level-based algorithm for $n_p = 1$ (without an optimization objective) works by traversing the cells of the $(k-1)$-level. During the traversal, it keeps track of the number of protected group members in the top-$k$ subset, and uses this count to test against the fairness constraint. With multiple protected groups, we adapt the method for the single protected group to keep track of the numbers of members for each protected group, and use them to against the fairness constraints whenever one of the numbers changes.

In the presence of ties, note that a top-$k$ subset with ties typically corresponds to a low-dimensional cell of the $(k-1)$-level, which is formed by the intersection of hyperplanes~\cite{halperin2017arrangements, cai2025finding}. To facilitate efficient calls to the backtracking subroutine for finding the top-$k$ subset satisfying fairness constraints in the presence of ties, we additionally track, for each protected group, the number of hyperplanes that lie strictly above the current $(k-1)$-level cell. Moreover, each $(k-1)$-level cell also stores auxiliary information about its intersecting hyperplanes, allowing the number of tied candidates for each protected groups membership profile to be computed efficiently.

\paragraph{$\boldsymbol{w}$ difference.}
For minimizing the $w$ difference, we focus on the cell which corresponds to a top-$k$ subset satisfying the fairness constraints. We refer to such a cell as a \textit{fair cell}. Note that if two weight vectors whose corresponding downward-directed rays in the dual space hit the same cell of the $(k-1)$-level, their top-$k$ subsets are the same~\cite{cai2025finding}. Consequently, to minimize the $w$ difference, one only has to find a weight vector that is closest to $w^o$ (under the $L_1$ distance metric) with its corresponding downward-directed ray hitting a fair cell. Equivalently, this problem becomes one of finding a weight vector closest to $w^o$ for each fair cell, and selecting an overall closest one among all fair cells. 

Let $\mathcal{F}$ denotes a fair cell that is projected onto the first $d - 1$ coordinates, and $\mathcal{H}$ be the set of boundary hyperplanes of $\mathcal{F}$. The fair cell $\mathcal{F}$ can be regarded as the intersection of the positive halfspaces induced by the hyperplanes in $\mathcal{H}$. To find a weight vector that is closest to $w^o$ under the $L_1$ metric, using the known technique that reformulates $L_1$-norm minimization as a linear program, we have that

\begin{equation}
\begin{aligned}
\text{min }  &\sum_{i=1}^{d} \phi_{i} &\\
\text{s.t. }     & \sum_{i=1}^{d - 1} h_iw_i + h_d\geq 0, &\forall h\in \mathcal{H}\\
                 & \phi_{i} \geq w_i - w^o_i, &i=1 ,\dots, d\\
                 & \phi_{i} \geq  w^o_i - w_i, &i=1 ,\dots, d\\
\end{aligned}
\end{equation}
To simplify the presentation, constraints on the weight vector (non-negativity and $||w||_1 = 1$) are omitted from the formulation. In this linear program, each $\phi_i$ can be regarded as the absolute difference between $w$ and $w^o$ at the coordinate $i$, and the objective function seeks to minimize the sum of the $\phi_i$. Notably, since $\mathcal{F}$ is a projection of a cell onto the first $d - 1$ coordinates, a hyperplane of $\mathcal{H}$ is of dimension $d-1$. For an $\mathcal{F}$ that only partially overlaps $V$, the weight vector space constraints implied by $V$ can also be easily incorporated into the linear program. See Figure~\ref{fig:theoretical-klevel} for an illustrative example. A fair cell of lower dimensions can also be handled similarly.

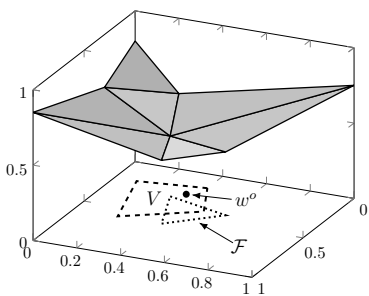
\begin{figure}[tbh!]
    \centering
    \resizebox{0.3\columnwidth}{!}{\begin{tikzpicture}
        \begin{axis}[y dir=reverse, xmin=0, xmax=1, ymin=0, ymax=1, zmin=0, zmax=1, 
            colormap = {slategraywhite}{rgb255=(220,220,220) rgb255=(180,180,180) rgb255=(140,140,140)}]
            \addplot3 [patch, patch type=triangle, draw=black, shader=flat, thick]
                coordinates { 
                    (0, 0, 0.8) (0.2, 0.0, 0.5) (0.0, 0.3, 0.65) 
                    (0.2, 0.0, 0.5) (0.0, 0.3, 0.65) (0.3, 0.3, 0.4)
                    (0.3, 0.3, 0.4) (0.2, 0.0, 0.5) (1.0, 0.0, 0.75)
                    (0.3, 0.3, 0.4) (0.0, 1.0, 0.85) (0.0, 0.3, 0.65)
                    (0.3, 0.3, 0.4) (0.6, 0.4, 0.42) (1.0, 0.0, 0.75)
                    (0.0, 1.0, 0.85) (0.4, 0.6, 0.42) (0.3, 0.3, 0.4)
                    (0.3, 0.3, 0.4)  (0.6, 0.4, 0.42) (0.4, 0.6, 0.42)
                };
            \addplot3[color=black, dashed, very thick]
                coordinates { (0.4, 0.15, 0.0) (0.1, 0.2, 0.0) (0.2, 0.6, 0.0)  (0.5, 0.4, 0.0)  (0.4, 0.15, 0.0)};
            \addplot3[color=black, dotted, very thick]
                coordinates { (0.3, 0.3, 0.0)  (0.6, 0.4, 0.0) (0.4, 0.6, 0.0)  (0.3, 0.3, 0.0) };

            \node [circle, fill=black, inner sep=1.5pt] at (axis cs:0.35, 0.25, 0.0) (wo){};
            
            \node at (axis cs:0.25, 0.35, 0.0) {\large $V$};
            \node at (axis cs:0.8, 0.7, 0.0) {\large $\mathcal{F}$};
            \node at (axis cs:0.6, 0.2, 0.0) {\large $w^o$};

            \node at (axis cs:0.8, 0.7, 0.0) (FA){};
            \node at (axis cs:0.5, 0.5, 0.0) (FB){};

            \node at (axis cs:0.57, 0.22, 0.0) (wA){};
            \node at (axis cs:0.35, 0.25, 0.0) (wB){};

            \draw [-{Latex}] (FA) -- (FB);
            \draw [-{Latex}] (wA) -- (wB);
        \end{axis}
    \end{tikzpicture}}
\caption{$(k-1)$-level (triangular mesh) and $V$ (dashed region on the $x$-$y$ plane) in 3-D. The dotted region (on the $x$-$y$ plane) represents a projected fair cell $\mathcal{F}$ and the point inside $V$ represents the input reference weight vector $w^o$. When minimizing the $w$ difference, the resulting fair weight vector will lie on the boundary of $\mathcal{F}$. In contrast, minimizing the utility loss enables a stable weight vector by selecting a weight vector inside $\mathcal{F}$ and away from its boundary.}\label{fig:theoretical-klevel}
\end{figure}

\paragraph{Utility loss.}
Just as in the case for minimizing the $w$ difference, only fair cells need to be considered for minimizing the utility loss. To do this, alongside the numbers of members for each protected group, we keep track of the utility of the top-$k$ subset, which is a summation of scores of candidates under $w^o$ within the top-$k$ subset. Since minimizing the utility loss is equivalent to maximizing the utility of the top-$k$ subset, the maximum utility of top-$k$ subset satisfying the fairness constraints is maintained during the traversal.

Special attention is needed for handling top-$k$ subsets in the presence of ties, as different tie-breaking outcomes may result in top-$k$ subsets with different utility (recall Example~\ref{eg:utility}). Since candidates of the same protected groups membership profile are exchangeable, a greedy strategy can be employed: once the backtracking algorithm identifies a fair tie-breaking outcome, for each protected groups membership profile, candidates with the highest scores are selected to achieve the maximum utility. To do so efficiently, for each protected groups membership profile, we sort the candidate by its score under $w^o$ with a non-increasing order. In fact, only partial sorting is needed since at most $t \leq k$ candidates can be selected for each protected groups membership profile. Then, an array of prefix sums over these sorted scores is computed, which allows fast retrieval of the maximum utility when selecting any given number of candidates from that protected groups membership profile. The utility of selected candidates is maintained using the same method for maintaining the protected group membership counts during the backtracking search.

Finally, one may observe that a weight vector minimizing the utility loss may not be unique, as the utility loss only considers the top-$k$ subset induced by the weight vector, and two weight vectors whose corresponding downward-directed rays in the dual space hit the same cell of the $(k-1)$-level have the same top-$k$ subsets. Of course, one can decide a single weight vector by additionally minimizing the $w$ difference, that is, finding the closest fair weight vector with the minimum utility loss. However, the utility loss optimization objective enables us to consider the other important aspect of the top-$k$ selection, that is, the stability of a scoring function \cite{asudeh2018obtaining}. In many applications, one may wish to find a stable scoring function, that is, small perturbations of the weights do not change the selected top-$k$ subset. Unfortunately, a fair weight vector that minimizes the $w$ difference is not stable by its nature, as a small perturbation towards $w^o$ would yield top-$k$ subsets violating the fairness constraints. With the utility loss objective, one is able to obtain a stable fair weight vector by solving the following linear program:
\begin{equation}\label{eq:lp-utility}
\begin{aligned}
\text{max }  &\xi &\\
\text{s.t. } & \sum_{i=1}^{d - 1} h_iw_i  + h_d \geq \xi, & \forall h\in \mathcal{H}
\end{aligned}
\end{equation}
Again, $\mathcal{H}$ is the set of boundary hyperplanes of a (projected) fair cell $\mathcal{F}$. This linear program enforces a margin of $\xi$ between the weight vector and the cell boundary, ensuring that the corresponding downward-directed ray hits the same cell for a weight vector under small perturbations bounded by $\xi$. As a result, the resulting fair weight vector is more stable with respect to perturbations (see Figure~\ref{fig:theoretical-klevel}).

\paragraph{Runtime analysis.}
The augmentations above consider the presence of multiple protected groups and the minimization of the $w$ difference or the utility loss. Although the former increases the number of variables maintained during the $k$-level cell traversal and uses a backtracking algorithm for tie-breaking, while the latter introduces additional overhead for fair cells, the overall run time does not increase by a lot when both $k$ and $n_p$ are small. Formally, we have the following:

\begin{thm}
    With $n_p = O(1)$, $k = O(\textnormal{polylog}(n))$ and $l = |V| = O(\textnormal{\text{polylog}}(n))$, Fair Top-$k$ Selection can be solved in $\tilde{O}(n)$ time for $d=2,3$ ($\tilde{O}(\cdot)$ hides a polylog $n$ factor). For constant dimensions $d\geq 4$, it can be solved in $\tilde{O}(n^{\lfloor d/2 \rfloor})$ (expected) time.
\end{thm}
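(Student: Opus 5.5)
The plan is to reduce to the runtime bounds already established in~\cite{cai2025finding} for the $k$-level-based algorithm with $n_p=1$. I will then show that each augmentation described above adds only polylogarithmic overhead per visited cell of the $(k-1)$-level. Recall the bounds from~\cite{cai2025finding}: for $k=O(\textnormal{polylog}(n))$ and $l=O(\textnormal{polylog}(n))$, the relevant portion of the $(k-1)$-level restricted to $V$ can be constructed and traversed in $\tilde{O}(n)$ time for $d=2,3$, and in $\tilde{O}(n^{\lfloor d/2\rfloor})$ expected time for constant $d\ge 4$. These bounds rely on the complexity of the $(\le k)$-level being $O(nk^{\lceil d/2\rceil-1})$-type and on the construction algorithms cited there. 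I will take the number of cells visited (of all dimensions), together with the traversal cost, as a black box $T_d(n,k,l)$. The proof then consists of bounding the extra per-cell work by $\textnormal{polylog}(n)$, plus a one-time preprocessing cost of $\tilde{O}(n)$.

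The per-cell work splits into three parts, which I will bound in order.

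\textbf{Counts.} Moving from a cell to an adjacent cell swaps $O(1)$ candidates in the top-$k$ subset in general position. When hyperplanes coincide at a lower-dimensional cell, it swaps a number of candidates bounded by the cell's incident hyperplanes, which is charged to the complexity of the arrangement. Each swap updates $n_p=O(1)$ counters, and also updates the utility under $w^o$ in $O(1)$ time.

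\textbf{Ties.} At a lower-dimensional cell, the tied candidates are exactly the hyperplanes through that cell. The number $t$ of tied slots is at most $k$. By Theorem~\ref{thm:ver} and Remark~\ref{rmk:backtrack}, the backtracking subroutine costs $O(t^{2^{n_p}})=\textnormal{polylog}(n)$, since $2^{n_p}=O(1)$. For utility loss, I will preprocess each membership profile's candidates by partially sorting them under $w^o$ and storing prefix sums. This is a one-time $O(n\log n)$ step. It lets each backtracking leaf evaluate the best utility in $O(2^{n_p})$ time, given that the greedy choice within a profile is optimal by exchangeability.

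\textbf{Objective.} For each fair cell, I will solve the $L_1$ linear program or the margin linear program~\eqref{eq:lp-utility}. These have $O(d)$ variables and as many constraints as the cell has boundary hyperplanes, plus $l$ constraints from $V$. Since $d$ is constant, Megiddo-type fixed-dimension LP runs in time linear in the number of constraints. Summed over cells, the cell boundary terms are bounded by the total complexity, which is already counted in $T_d$. The $V$ terms contribute $O(l)=\textnormal{polylog}(n)$ per cell. For the utility objective, only one LP is needed, namely for the cell achieving maximum utility.

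Combining these, the total is $T_d(n,k,l)\cdot \textnormal{polylog}(n)+\tilde{O}(n)$, which yields $\tilde{O}(n)$ for $d=2,3$ and $\tilde{O}(n^{\lfloor d/2\rfloor})$ for $d\ge 4$.

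I expect the main obstacle to be the degenerate cells. I must argue that the tie-handling bookkeeping (tracking, per protected group, the number of hyperplanes strictly above a cell, and the per-profile counts of incident hyperplanes) can be maintained without re-scanning. I also need that the sum over all cells of their incident-hyperplane counts stays within the level complexity bound used in~\cite{cai2025finding}. My first attempt will charge each incident hyperplane of a cell to a face incidence in the arrangement. I will then invoke the fact that the total number of face incidences in the $(\le k)$-level has the same asymptotic complexity as the level itself for constant $d$. Under this charging, the per-cell work is $O(\#\text{incident hyperplanes})+\textnormal{polylog}(n)$.
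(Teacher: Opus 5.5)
Your overall strategy matches the paper's. You take the $n_p=1$ bounds for the $k$-level-based algorithm as a black box and show that each augmentation adds only polylogarithmic work per visited cell: $O(n_p)$ counter updates, $O(t^{2^{n_p}})$ backtracking with $t\le k$, and a fixed-dimensional LP in $O(|\mathcal{H}|+l)$ time.

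Your accounting differs from the paper's in two places, and both are fine. You charge the degenerate-cell cost and the LP boundary terms to hyperplane--face incidences and to facet counts of the level. The paper instead argues that degeneracies only decrease the number of cells, and subdivides each cell into simplices with $O(1)$ boundary hyperplanes. Your observation that the margin LP~\eqref{eq:lp-utility} needs solving only for the optimal cell is also correct.

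Two small cautions on that part:
\begin{itemize}
\item The $(\le k)$-level complexity is $O(n^{\lfloor d/2\rfloor}k^{\lceil d/2\rceil})$, not $O(nk^{\lceil d/2\rceil-1})$. This is harmless, since you black-box $T_d$.
\item The incidence total must be compared with that general-position worst case, not with the face count of the actual input level. For $d=2$, $n$ lines through one point give a $(k-1)$-level with $O(1)$ cells but $n$ incidences at its vertex. The paper's remark about degeneracies implicitly makes the same assertion.
\end{itemize}

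The step that fails as written is the utility-loss preprocessing. Suppose you sort each membership profile's candidates once under $w^o$ and store prefix sums over that global order. This does not give the utility of the best $m_i$ \emph{tied} candidates of profile $i$ at a given cell. The tied set $M_1\cup M_2$ at a cell (the hyperplanes through it) is an arbitrary subset of the profile, not a prefix of its global $w^o$-order. So those prefix sums are simply the wrong numbers. Exchangeability holds only among the tied candidates of a profile, not across the whole profile.

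The paper instead sorts per cell. At each degenerate cell, for each profile it partially sorts that cell's tied candidates by $w^o$-score: it selects the top $\min(t,S[i])$ and sorts them. It then builds prefix sums over these, which the backtracking leaves query.

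This costs $O(j+2^{n_p}t\log t)$ at a cell with $j$ incident hyperplanes. You can charge the $O(j)$ term exactly as you charge the construction of $S$, and $t\log t\le k\log k$ is polylogarithmic. Only the $w^o$-scores themselves can be precomputed once, in $O(nd)$ time. With this replacement, your argument goes through.
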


\begin{proof}
    It is known that for $n_p = 1$ and $k = O(\textnormal{polylog}(n))$, Fair Top-$k$ Selection can be solved in $\tilde{O}(n)$ time for $d=2,3$, and in $\tilde{O}(n^{\lfloor d/2 \rfloor})$ (expected) time for a constant $d\geq 4$~\cite{cai2025finding}. Consider the additional overhead introduced by the augmentations. For the multiple protected groups, the additional time spent for maintaining additional variable is $O(n_p)$, which are constant-time operations as $n_p = O(1)$. For the backtracking subroutine, the additional overhead is $O(j + k^{2^{n_p}})$, where $j$ is the number of intersecting hyperplanes at a cell. Since each additional intersecting hyperplane (making the configuration divert from a general position one) decreases the number of cells, the first part of the cost only adds $O(1)$ per cell in the worst case. Since $k = O(\textnormal{polylog}(n))$ and $n_p = O(1)$, the second part of the cost only adds a polylog $n$ factor of run time to each cell. For minimizing $w$ difference and the utility loss, an additional linear program has to be solved. Since the number of variables is at most $2d$, this can be achieved in $O(|\mathcal{H}| + l)$ time using a linear-time algorithm for fixed-dimensional linear programming \cite{megiddo1984linear,seidel1991small}. Since each cell can be subdivided into simplices~\cite{henk2017basic}, each with $O(1)$ boundary hyperplanes in fixed dimensions, and the number of cells remains bounded by the worst-case number after the subdivision, the first part of the cost only adds $O(1)$ per cell in the worst case. The additional cost of solving the linear program for each cell is then $O(\textnormal{polylog}(n))$, as $l = O(\textnormal{\text{polylog}}(n))$. Minimizing the utility loss also introduces additional overheads to the backtracking subroutine. However, the overhead per cell is only $O(\log k) = O(\log \log n)$ due to the partial sorting.
\end{proof}

\section{Practical algorithms}
As noted in Remark~\ref{rmk:backtrack}, the backtracking algorithm presented in Section~\ref{subsec:small-np} primarily illustrates the key component for breaking the lower bounds. Since the algorithm is used as a subroutine in our practical algorithms for the Fair Top-$k$ Selection problem, in this section, we present an efficient implementation of it, incorporating several practical heuristics for runtime improvements. As for solving the Fair Top-$k$ Selection problem, a previous work~\cite{cai2025finding} proposes an efficient two-pronged solution for $n_p = 1$ without an optimization objective: a $k$-level-based algorithm for small $k$ and a MILP-based algorithm for large $k$. Notably, instead of using the theoretically efficient $k$-level-based algorithm, practical variants that address various engineering challenges are employed. In this section, we show how to augment practical $k$-level-based algorithms to handle multiple protected groups and optimization objectives ($w$ difference and utility loss), and we discuss various engineering decisions and trade-offs in the augmentations. Finally, we show how to augment the mixed-integer linear programming-based (MILP-based) algorithm, which handles the case of large $k$ in the two-pronged solution, to accommodate multiple protected groups and optimization objectives.

\subsection{Engineering the backtracking algorithm for tie-breaking}\label{subsec:backtrack}
Recall in Section~\ref{subsec:small-np}, the backtracking algorithm regards each distinct protected groups membership profile as a level in the search tree. However, it is possible that some protected group membership profiles may have no candidates associated with them, and thus can be ignored. Let $\beta \leq 2^{n_p}$ be the number of distinct protected groups membership profiles with at least one candidate. The backtracking search then becomes one of enumerating all possible value assignments to $\beta$ non-negative integer variables whose sum is exactly $t$. By applying the ``stars and bars'' technique, one can show that the total number of such assignments is $\binom{t + \beta - 1}{\beta - 1}$, which is $O(t^{\beta - 1})$ when $\beta = O(1)$. Notably, since adding ``artificial'' sensitive attribute values corresponding to the intersections of multiple protected groups does not increase $\beta$, the algorithm incurs only a small overhead for handling intersectional fairness constraints.

To accelerate the backtracking, we also prune the search tree by considering the total number of remaining candidates during the search. At each level of the search tree, one can check the total number candidates available in all subsequent levels. The number of candidates selected at this level must be large enough such that the total number of selected candidates can reach $t$ if all remaining candidates in subsequent levels are selected. This information, that the total number candidates in the remaining levels, can be easily retrieved from an array of prefix sum over the numbers of candidates of each protected groups membership profile.

Furthermore, notice that when $n_p = 1$, the fairness test can be done with a greedy approach, which computes the tight lower and upper bounds on the number of protected group candidates for all possible top-$k$ subsets~\cite{cai2025finding}. These bounds are used to test against the fairness constraint. This approach is also employed in our implementation to deal the case of $n_p = 1$. When $n_p > 1$, this greedy approach is no longer sufficient to guarantee correctness. However, it can still be employed as a fast pruning step before invoking the backtracking search. For each protected group, we use the greedy approach to compute the corresponding lower and upper bounds on the number of candidates. If the interval defined by these bounds does not overlap the given fairness constraint for that group, one can safely conclude all top-$k$ subsets violate the fairness constraints.

Finally, to minimize the utility loss, an approach that tracks the utility of the top-$k$ subset during the backtracking search was proposed in Section~\ref{subsec:theoretical-klevel}. In our implementation, we take a simpler approach that, whenever the search reaches a leaf corresponding to a fair top-$k$ subset, its utility is computed from scratch using the prefix sum array of candidate scores for each protected groups membership profile, where the number of candidates of each membership profile is stored in $S'$ during the backtracking search (recall Algorithm~\ref{alg:backtrack}). Even though this approach increases the runtime overhead from $O(n_p)$ to $O(\beta)$, where $\beta$ denotes the number of distinct protected groups membership profiles with at least one candidate, the value of $\beta$ is also small in practice, and utility computation is only executed at fair leaves. Moreover, this approach is numerically more robust, as dynamically updating the utility during the search is more susceptible to round-off errors than recomputing it from scratch. Notably, when $n_p = 1$, a greedy approach, which selects a candidate with the highest possible score while complying with the fairness constraint, is sufficient for the correctness. This method is implemented for handling the $n_p = 1$ case.

\subsection{Practical $k$-level-based algorithms}\label{subsec:pract-klevel}
For $n_p = 1$ without an optimization objective, two practical $k$-level-based algorithms for small $k$ have already been given in~\cite{cai2025finding}: one for $d=2$ and one for $d\geq 3$. We first describe how to augment the 2-D $k$-level-based algorithm, and then move to the augmentations of the multi-dimensional $k$-level-based algorithm.

\subsubsection{2-D $k$-level-based algorithm}
For $n_p = 1$ without an optimization objective, the 2-D $k$-level-based algorithm~\cite{cai2025finding} is an adaptation of an algorithm in \cite{chan1999remarks} (briefly outlined in \cite{basch1996reporting}). The algorithm is a sweep-line algorithm using two kinetic tournament trees \cite{basch1996reporting}, $S_1$ and $S_2$, to keep track of the $k$-level cell. Viewing the $x$-coordinate of the sweep-line as ``time'', $S_1$ (resp. $S_2$) contains the $k$ (resp. $n-k$) lines lying above (resp. below) at the current time instant. The algorithm proceeds by processing internal queue events or updating queues during the sweep. During the sweep, the number of protected group member is maintained for the fairness test. And the tie-breaking is resolved with the following techniques: (1) Symbolic perturbation~\cite{edelsbrunner1990simulation} for event processing of kinetic tournament trees. (2) Carefully identifying pairs of intersecting lines to exchange between $S_1$ and $S_2$. (3) Guided tree traversal to collect all intersecting lines. (4) Greedy approach to compute tight lower and upper bounds on the number of protected group candidates for all possible top-$k$ subsets.

\paragraph{Multiple protected groups.}
Just as was done in the theoretical $k$-level-based algorithm, to accommodate multiple protected groups, we keep track of the member counts for each protected group within $S_1$ during the sweep. For the tie-breaking with $n_p > 1$, after collecting the intersecting lines found by the guided tree traversal of $S_1$ and $S_2$, the dictionary $N$, which maintains the member counts for each protected group during the backtrack search, is initialized using the protected group member counts within $S_1$ and the intersecting lines from $S_1$. The array $S$ can also be easily initialized by the set of intersecting lines (from $S_1$ and $S_2$). These steps together enable an invocation of the backtracking subroutine.

\paragraph{$\boldsymbol{w}$ difference.}
Since $d=2$ and $||w||_1 = 1$, for a given reference weight vector $w^o = (w_x^o, 1 - w_x^o)$ and any weight vector $w = (w_x, 1 - w_x)$, the $w$ difference can be written as $g^o(w) = ||w - w^o||_1 = 2|w_x - w_x^o|$. Consequently, finding a fair weight vector minimizing the  $w$ difference is equivalent to finding a fair weight vector minimizing the absolute difference in the first coordinate. In the context of the sweep-line algorithm, regarding $w^o$ as a vertical line $x = w_x^o$, the goal becomes finding the closest vertical line that yields a top-$k$ subset satisfying the fairness constraints. Based on this observation, one can employ a bidirectional sweep-line algorithm to find the fair weight vector minimizing the $w$ difference. Starting from a weight vector that is to the right (resp. left) of $w^o$ and is in $V$ (including $w^o$ itself), the sweep-line moves from left to right (resp. from right to left) and stops at the first position that yields a fair top-$k$ subset. The fair weight vector that minimizes the $w$ difference is obtained from the two resulting fair weight vectors produced by the bidirectional line-sweeping (see Figure~\ref{fig:opt1}).

This approach is efficient as it only (implicitly) constructs the part of $(k-1)$-level that is relevant to the search. However, it requires a kinetic tournament tree to be constructed twice, especially given that different sweeping directions affect the outcomes of symbolic perturbations. In the experiments for $n_p = 1$~\cite{cai2025finding} (as well as those in Section~\ref{subsubsec:exp2d-multi}), it was observed that the time spent on the line-sweeping process only accounted for only a small fraction of the total run time, making the construction of a kinetic tournament tree relatively expensive in practice. In our implementation, we take a simpler and more efficient approach which sweeps along one direction, starting from the left end-point defined by $V$ while tracking the minimum $w$ difference among the fair weights vectors. The previous observation is used as an early termination criterion, that if the sweep-line passes $w^o$ and encounters a fair weight vector, the algorithm terminates immediately.

\begin{figure}[tbh!]
    \captionsetup[subfigure]{justification=centering}
    \centering
\begin{subfigure}[b]{0.3\columnwidth}
    \centering
    \includegraphics[width=0.7\linewidth]{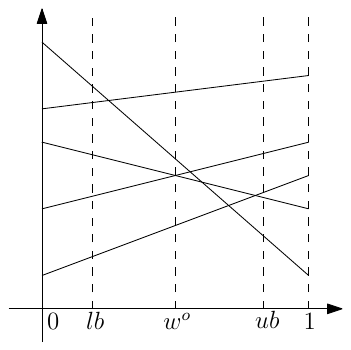}
    \caption{}\label{fig:opt1}
\end{subfigure}
\begin{subfigure}[b]{0.3\columnwidth}
    \centering
    \includegraphics[width=0.7\linewidth]{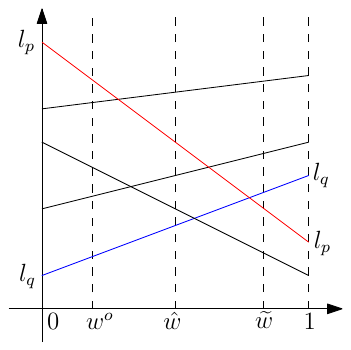}
    \caption{}\label{fig:opt2}
\end{subfigure}
\caption{\textbf{(a)} $w$ difference minimization in 2-D, where the interval represents $V$ and the vertical dashed line $w^o$ represents the reference weight vector. The bidirectional sweep-line algorithm works by sweeping from $w^o$ to $ub$ and to $lb$. \textbf{(b)} For the weight vectors $\hat{w}$ and $\widetilde{w}$ with $k = 3$, $l_p$ (resp. $l_q$) is in the top-$k$ subset of $\hat{w}$ (resp. $\widetilde{w}$) but not in that of $\widetilde{w}$ (resp. $\hat{w}$). Intuitively, $l_p$ lies above $l_q$ at $x = w_x^o$ since the two lines intersect at a point between $x = \hat{w}_x$ and $x = \widetilde{w}_x$, where their orders change.}
\end{figure}

\paragraph{Utility loss.}
At first glance, it appears that for finding a fair weight minimizing the utility loss, the entire $(k-1)$-level within $V$ must be swept. However, the bidirectional sweep-line algorithm can also be applied here, due to the following observation.

\begin{lem}\label{lem:2d-utility}
    In 2-D, consider a given a reference weight vector $w^o = (w_x^o, 1 - w_x^o)$ and two weight vectors $\hat{w} = (\hat{w}_x, 1 - \hat{w}_x)$ and $\widetilde{w} = (\widetilde{w}_x, 1 - \widetilde{w}_x)$, where $w_x^o < \hat{w}_x < \widetilde{w}_x$ or $w_x^o > \hat{w}_x > \widetilde{w}_x$. The utilities of any pair of their corresponding top-$k$ subsets, $\tau_k^{\hat{w}}$ and $\tau_k^{\widetilde{w}}$, must satisfy $U^o(\tau_k^{\hat{w}}) \geq U^o(\tau_k^{\widetilde{w}})$.
\end{lem}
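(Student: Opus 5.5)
The plan is a direct exchange argument that uses only the fact that, in 2-D, a candidate's score is an affine function of the single free coordinate $w_x$. For a candidate $c$ with $p(c) = (x_c, y_c)$ and a weight vector $w = (a, 1-a)$, write
\[
f_c(a) = a\,x_c + (1-a)\,y_c,
\]
which is affine in $a$. I treat the case $w_x^o < \hat{w}_x < \widetilde{w}_x$. The case $w_x^o > \hat{w}_x > \widetilde{w}_x$ follows by the mirror argument: reverse the orientation of the $a$-axis, or equivalently swap the roles of the two coordinates.

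First, fix arbitrary top-$k$ subsets $\tau_k^{\hat{w}}$ and $\tau_k^{\widetilde{w}}$. Let $P = \tau_k^{\hat{w}} \setminus \tau_k^{\widetilde{w}}$ and $Q = \tau_k^{\widetilde{w}} \setminus \tau_k^{\hat{w}}$. Both subsets have size $k$, so $|P| = |Q|$, and I fix an arbitrary bijection $\pi : P \to Q$. The common part $\tau_k^{\hat{w}} \cap \tau_k^{\widetilde{w}}$ contributes equally to both utilities. Hence
\[
U^o(\tau_k^{\hat{w}}) - U^o(\tau_k^{\widetilde{w}}) = \sum_{c \in P} \bigl( f_c(w_x^o) - f_{\pi(c)}(w_x^o) \bigr).
\]
It therefore suffices to show that every summand is non-negative.

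Second, fix $c \in P$ and set $q = \pi(c)$. By the definition of a top-$k$ subset, $c \in \tau_k^{\hat{w}}$ and $q \notin \tau_k^{\hat{w}}$, so $f_c(\hat{w}_x) \geq f_q(\hat{w}_x)$. Likewise, $q \in \tau_k^{\widetilde{w}}$ and $c \notin \tau_k^{\widetilde{w}}$, so $f_q(\widetilde{w}_x) \geq f_c(\widetilde{w}_x)$. Define $D(a) = f_c(a) - f_q(a)$; it is affine with some slope $s$, and we have $D(\hat{w}_x) \geq 0 \geq D(\widetilde{w}_x)$.

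Suppose $s > 0$. Since $\widetilde{w}_x > \hat{w}_x$, this would give $D(\widetilde{w}_x) > D(\hat{w}_x) \geq 0$, a contradiction. Hence $s \leq 0$. Because $w_x^o < \hat{w}_x$, it follows that $D(w_x^o) \geq D(\hat{w}_x) \geq 0$. Geometrically, this is the fact from Figure~\ref{fig:opt2}: the two lines can cross at most once, and that crossing lies in $[\hat{w}_x, \widetilde{w}_x]$, so $c$ still lies above $q$ at $x = w_x^o$. Summing over $P$ gives the claim.

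The step needing the most care is the tie handling, and the plan is to avoid it entirely. The argument never assumes the top-$k$ subsets are unique. It uses only the weak inequalities supplied by the definition of $\tau_k^w$, and it works for any bijection. The case $s = 0$, or $D$ vanishing at both points, is then covered automatically by the weak inequalities. So the statement holds for every pair of top-$k$ subsets, as required, and no special tie-breaking analysis is needed.
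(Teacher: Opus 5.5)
Your proposal is correct and takes essentially the same route as the paper: both take $p\in\tau_k^{\hat{w}}\setminus\tau_k^{\widetilde{w}}$ and $q\in\tau_k^{\widetilde{w}}\setminus\tau_k^{\hat{w}}$, use the two weak top-$k$ inequalities at $\hat{w}_x$ and $\widetilde{w}_x$ to fix the sign of the slope of the affine score difference (the dual lines $l_p$, $l_q$), and extrapolate to $w_x^o$. Your explicit bijection $\pi\colon P\to Q$ makes rigorous the summation step that the paper leaves implicit when it passes from the pairwise comparison to the utility inequality.
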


\begin{proof}
    The proof is given only for the $w_x^o < \hat{w}_x < \widetilde{w}_x$ case, as the $w_x^o > \hat{w}_x > \widetilde{w}_x$ case is symmetric. Let $\tau_k^{\hat{w}}$ and $\tau_k^{\widetilde{w}}$ be an arbitrary top-$k$ subset of $\hat{w}$ and $\widetilde{w}$, respectively. If $\tau_k^{\hat{w}} \neq \tau_k^{\widetilde{w}}$ (otherwise, $U^o(\tau_k^{\hat{w}}) \geq U^o(\tau_k^{\widetilde{w}})$ trivially holds), one can find a pair of points such that each top-$k$ subset contains a distinct point from the pair. More formally, we have a pair of points, $p$ and $q$, such that $p\in \tau_k^{\hat{w}}$ and $q\in \tau_k^{\widetilde{w}}$ but $p\notin \tau_k^{\widetilde{w}}$ and $q\notin \tau_k^{\hat{w}}$. By the dual transformation, we have two lines $l_p:  y = (p_x - p_y)x + p_y$ and  $l_q: y = (q_x - q_y)x + q_y$, in which the value of the $y$-coordinate is a score under a weight vector. By $p\in \tau_k^{\hat{w}}$ and $q\notin \tau_k^{\hat{w}}$, we have $(p_x - p_y)\hat{w}_x + p_y \geq (q_x - q_y)\hat{w}_x + q_y$, since $p$ is among top-$k$ under $\hat{w}$ and $q$ is not. Similarly, By $q\in \tau_k^{\widetilde{w}}$ and $p\notin \tau_k^{\widetilde{w}}$, we have $(p_x - p_y)\widetilde{w}_x + p_y \leq (q_x - q_y)\widetilde{w}_x + q_y$. With $\hat{w}_x < \widetilde{w}_x$, we have that $(p_x - p_y)\Delta x \geq (q_x - q_y)\Delta x$ for any $\Delta x \leq 0$, since the direction of the inequality changes as the value of $x$ decreases from $\widetilde{w}_x$ to $\hat{w}_x$. Then, given $w_x^o < \hat{w}_x$, we have that $(p_x - p_y)(w_x^o - \hat{w}_x) \geq (q_x - q_y)(w_x^o - \hat{w}_x)$, and thus $(p_x - p_y)w_x^o + p_y \geq (q_x - q_y)w_x^o + q_y$. Because for any such pair of $p$ and $q$, the score of $p$ is at least as large as that of $q$ under $w^o$, the utility of $\tau_k^{\hat{w}}$, which is the sum of all candidate scores under $w^o$, is at least as large as that of $\tau_k^{\widetilde{w}}$ (see Figure~\ref{fig:opt2}).
\end{proof}

In other words, Lemma~\ref{lem:2d-utility} tells us that if two weight vectors lie on the same side of $w^o$, the one that is closer to $w^o$ along the $x$-coordinate has a utility loss that is lower than or equal to that of the other one, as the utility of its corresponding top-$k$ subset is at least as large. Based on this observation, the bidirectional line sweeping process can also stop at the first position that yields a fair top-$k$ subset, since further sweeping does not yield a top-$k$ subset with a higher utility. Of course, for practical efficiency, we use a unidirectional sweeping approach by tracking the highest utility of top-$k$ subsets during the sweep in our implementation. Lemma~\ref{lem:2d-utility} can still be leveraged as an early termination criterion, analogous to the case of $w$ difference minimization.

To obtain a stable fair weight vector, with the bidirectional sweep-line algorithm, one can continue the sweeping to identify the other end-point of the $(k-1)$-level cell (i.e. line segment), and then obtain the weight vector via the midpoint of the cell. Here, we introduce an alternative approach that, while potentially being less efficient, is easy to implement and more flexible. After obtaining the fair weight vector $w^f$ that minimizes the utility loss, we run a variant of our algorithm for the Fair Top-$k$ Verification problem (Algorithm~\ref{alg:fairtopkver}) to obtain its corresponding top-$k$ subset, $\tau_k^{\scriptscriptstyle f}$, that minimizes the utility loss. Then, the $k$th candidate, $c_k$, is obtained from $\tau_k^{\scriptscriptstyle f}$. (The tie-breaking involved here will be discussed later.) For any candidate $c \in \tau_k^{\scriptscriptstyle f} \setminus \{c_k\}$, its corresponding line in the dual space, $l_c$, must lie above $l_{c_{\scriptscriptstyle k}}$ associated with $c_k$. Consider the intersection point of $l_c$ and $l_{c_{\scriptscriptstyle k}}$, if the intersection point falls within the interval defined by $V$. It partitions $V$ into two intervals, and only one of them is valid as the corresponding downward-directed ray of the stable fair weight vector intersects $l_c$ first. Similar, for any candidate $c \in C \setminus \tau_k^{\scriptscriptstyle f}$, its corresponding line in the dual space must lie below the corresponding line of $c_k$, and a valid interval can be derived in the same manner. The intersection of these intervals define a weight vector subspace corresponding to the top-$k$ subset $\tau_k^{\scriptscriptstyle f}$, and we take the midpoint of the interval as the stable fair weight vector.

Apart from the time spent on backtracking search (which the sweep-line algorithm would also incur), this approach takes $O(n\cdot t)$ time, as there are $t$ candidates that are possibly the $k$th one due to ties in scores (recall Section~\ref{subsec:small-np}). In the worst case, $t$ can be as large as $k$, though this rarely happens in practice. Moreover, the approach can be applied as a post-processing step to any algorithm that finds a fair weight vector minimizing the utility loss, making it both flexible and decoupled from the specific Fair Top-$k$ Selection algorithm. Finally, it is easy to extend this method to higher dimensions, as will be shown shortly.

\subsubsection{Multi-dimensional $k$-level-based algorithm}\label{subsubsec:pract-klevel-md}
For $n_p = 1$ without an optimization objective, the multi-dimensional $k$-level-based algorithm~\cite{cai2025finding} adapts the algorithm in \cite{andrzejak1999optimization}. Starting from a cell and its top-$k$ subset, the algorithm explores adjacent cells by testing potential top-$k$ subsets, formed by swapping one top-$k$ element with an outsider. The test can be done by finding a separating hyperplane via a linear program. This enables enumeration of all valid top-$k$ subsets in a breadth-first manner. Via a lockless (i.e., without explicit locking mechanisms) implementation proposed in~\cite{cai2025finding} that parallelizes the testings of potential top-$k$ subsets, the algorithm can efficiently utilize a multi-core shared-memory system.

\paragraph{Multiple protected groups.}
Augmenting the algorithm to handle multiple protected groups is straightforward. Since the algorithm actually enumerates all valid top-$k$ subsets, one can directly count the number of members of each protected group within a top-$k$ subset, and test the numbers against the fairness constraints. This is a direct extension to the method used for the $n_p = 1$ case.

Additionally, we further improve the performance of the practical multi-dimensional $k$-level-based algorithm by incorporating a technique from~\cite{cai2025finding}, which was used to improve the baseline algorithm therein. When considering a swap between a current top-$k$ element and an outsider candidate, we examine the intersection of their corresponding hyperplanes in the dual space. If this intersection lies outside the region $V$ (after projecting the intersecting onto the first $d-1$ coordinates), it implies that the element being swapped out always has a higher score than the element being swapped in for all weight vectors within $V$. In this case, the swapping can be skipped. Since $l$ (the number of linear inequalities of $V$) is typically small, extreme points of $V$ are precomputed to speed-up intersection tests. This preprocessing incurs small overhead, especially when the dimensionality $d$ is low.

\paragraph{$\boldsymbol{w}$ difference.}
Just as was done in the theoretical $k$-level-based algorithm, one can minimize the $w$ difference by solving linear programs for fair cells. However, in our practical multi-dimensional $k$-level-based, the $(k-1)$-level cell is not explicitly determined, thus its set of boundary hyperplanes is not known. This issue can be addressed by utilizing the knowledge of the top-$k$ subset and adding a cut-off variable, which gives us the following linear program:
\begin{equation}
\begin{aligned}
\text{min }  &\sum_{i=1}^{d} \phi_{i} &\\
\text{s.t. } & p(c) \cdot w \geq \lambda, & \forall c\in \tau_k^{w}\\
             & p(c) \cdot w \leq \lambda, & \forall c\in C \setminus \tau_k^{w}\\
             & \phi_{i} \geq w_i - w^o_i, &i=1 ,\dots, d\\
             & \phi_{i} \geq  w^o_i - w_i, &i=1 ,\dots, d
\end{aligned}
\end{equation}
Notice that in this linear program, the top-$k$ subset, $\tau_k^{w}$, is fixed and determined by the previous step. Our goal is to find a weight vector that minimizes the $w$ difference within the cell. Thus, the top-$k$ subset remains unchanged throughout the optimization process. There are $2d + 1$ variables in this linear program. While the equation $\sum_{i=1}^d w_i = 1$ can be utilized to eliminate one dimension, the linear program still has $2d$ variables. Due to the factor of two in the number of variables, Seidel's LP algorithm \cite{seidel1991small} that we used to solve linear programs for a small number of variables may quickly become inefficient as the dimensionality increases. The run time of Seidel's LP algorithm includes a multiplicative term that grows as the factorial of the number of variables, so the algorithm is efficient only when the number of variables is a small constant. Moreover, our experimental results (Section~\ref{subsubsec:expmd-multi}) suggested that the number of fair cells is typically small, thus these linear programs may only account for only a small fraction of the total run time. In our implementation, we use the simplex algorithm to solve linear programs for minimizing the $w$ difference, as the algorithm is practically efficient and its performance is less sensitive to the number of variables.

Since finding a fair weight vector that minimizes the $w$ difference requires accounting for all fair cells, we also adapt the previous lockless parallel implementation~\cite{cai2025finding} to support this process. Specifically, each thread maintains a local optimal solution while traversing the $(k-1)$-level cells. Upon completion of the traversal, a final reduction step is used to determine the global minimum among these local optima. While a parallel reduction is possible~\cite{frigo2009reducers}, we employ a simple serial reduction in our implementation, as it is sufficiently efficient for typical multi-core shared-memory systems (up to a few hundred cores).

\paragraph{Utility loss.}
Comparing with $w$ difference minimization, minimizing the utility loss in our practical multi-dimensional $k$-level-based algorithm is relatively simple. For a top-$k$ subset that satisfies the fairness constraints, its utility can be directly computed, and the one with the highest utility can be maintained during the traversal. The lockless parallel implementation can also be adapted using the same method for $w$ difference minimization.

To find a stable fair weight vector, we extend the approach used in the 2-D $k$-level-based algorithm. Again, for a fair weight vector $w^f$ that minimizes the utility loss, its corresponding top-$k$ subset $\tau_k^{\scriptscriptstyle f}$ is obtained, and the $k$th candidate, $c_k$, is obtained from $\tau_k^{\scriptscriptstyle f}$. For any candidate $c \in \tau_k^{\scriptscriptstyle f} \setminus \{c_k\}$, the intersection of the hyperplane corresponding to  $c$ with that of $c_k$ can be determined by $(p(c) - p(c_k))\cdot w = 0$, as a hyperplane intersection corresponds to a tie in scores. Projecting this intersection onto the first $d - 1$ coordinates partitions the weight space into two subspaces, and only one of them is valid as the downward-directed ray of the stable fair weight vector intersects the hyperplane associated with $c$ first. For any candidate $c \in C \setminus \tau_k^{w^{\scriptscriptstyle f}}$, a valid subspace can be derived in the same manner. The set of bounding hyperplanes of these subspaces is actually a superset of $\mathcal{H}$, which is the set of boundary hyperplanes of the (protected) fair cell $\mathcal{F}$. Moreover, the intersection of these subspaces is $\mathcal{F}$ itself. Thus, the linear program~(\ref{eq:lp-utility}) can also be applied here, by replacing $\mathcal{H}$ with the hyperplanes determined in the previous steps. Since the linear program contains $d + 1$ variables (which can be reduced to $d$), Seidel's LP algorithm is employed. Solving the linear programs takes $O(t\cdot (l + n))$ time, with a linear time algorithm for linear programming in fixed dimensions being employed.

\subsection{Mixed-integer linear programming-based algorithm}
For $n_p = 1$ without an optimization objective, a mixed-integer linear programming-based (MILP-based) algorithm~\cite{cai2025finding} is used for large $k$, which is practically efficient despite being theoretically suboptimal. The algorithm uses a binary indicator variable, $\delta_c$, to encode whether a candidate $c$ is within a top-$k$ subset or not. Assume without loss of generality that each scoring attribute value is in the range of $[0, 1]$ (as discussed in~\cite{cai2025finding}), this variable satisfies the following inequalities:
\begin{equation}
    -1 \leq w \cdot p(c) - \lambda - \delta_c \leq 0,
\end{equation}
where $\lambda\in [0, 1]$ is a cut-off value such that the pair $(w,\lambda)$ can be regarded as a separating hyperplane. Then, the problem is formalized into a MILP to find a weight vector whose top-$k$ subset satisfying the fairness constraint, using the indicator variables to enforce it. The resulting MILP is solved using a state-of-the-art MILP solver~\cite{gurobi}.

\paragraph{Multiple protected groups.}
To extend the approach to the case where $n_p > 1$, we apply the method for encoding the fairness constraint for one protected group to each protected group. The resulting fairness constraints are formulated as
\begin{equation}
    L_{k}^{\scriptscriptstyle \mathcal{G}_j} \leq \sum_{c \in \mathcal{G}_j} \delta_c \leq U_{k}^{\scriptscriptstyle \mathcal{G}_j},
\end{equation}
for all $1\leq j \leq n_p$. All other constraints are kept as-is.

\paragraph{$\boldsymbol{w}$ difference.}
In Section~\ref{subsec:theoretical-klevel}, minimizing the $w$ difference is formalized as a linear program, making it easy to incorporate it into the mixed-integer linear program. By using the same formulation method, we have the following MILP:
\begin{equation}
\begin{aligned}
\text{min }  &\sum_{i=1}^{d} \phi_{i} &\\
\text{s.t. } &\sum_{c \in C} \delta_c = k &\\
             &\sum_{c \in \mathcal{G}_j} \delta_c \geq L_{k}^{\scriptscriptstyle \mathcal{G}_j}, & j=1,\dots, n_p\\
             &\sum_{c \in \mathcal{G}_j} \delta_c \leq U_{k}^{\scriptscriptstyle \mathcal{G}_j}, & j=1,\dots, n_p\\
             & w \cdot p(c) - \lambda - \delta_c \leq 0, & \forall c\in C\\
             & w \cdot p(c) - \lambda - \delta_c \geq -1, & \forall c\in C\\
             & \phi_{i} \geq w_i - w^o_i, &i=1 ,\dots, d\\
             & \phi_{i} \geq  w^o_i - w_i, &i=1 ,\dots, d\\
\end{aligned}
\end{equation}
with constraints on the weight vector $w$, the cut-off variable $\lambda$ (i.e., $\lambda\in [0, 1]$), and indicator variables $\delta_c$ (i.e., $\delta_c \in \{0, 1\}$ for all $c \in C$) also being incorporated accordingly (omitted from the above formulation to simplify the presentation).

\paragraph{Utility loss.}
As noted in the previous discussion (Section~\ref{sec:prelim}), minimizing the utility loss is equivalent to maximizing the utility of the top-$k$ subset induced by fair weight vectors. This optimization objective can be formalized as the following objective function by using the indicator variables:
\begin{equation}\label{eq:util-objective}
    \sum_{c\in C}\delta_c (w^o \cdot p(c)).
\end{equation}
Since $\delta_c$ encodes the top-$k$ subset membership for the candidate $c$, the summation above actually corresponds to the utility of the top-$k$ subset. Moreover, because each score under the $w^o$ can be precomputed for each candidate $c$, the coefficient associated with $\delta_c$ is fixed. Consequently, the resulting objective function is linear and can be directly incorporated into the MILP. Notably, the application of linear programming relaxation in MILP solvers may yield values of $\delta_c$ that are not exactly binary ($0$ or $1$), thereby introducing numerical errors into the objective function (\ref{eq:util-objective}). This issue can be mitigated by adjusting solver tolerances appropriately.

To obtain a stable fair weight vector, the method used in the practical $k$-level-based algorithm (Section~\ref{subsec:pract-klevel}) is also applied here. Since the method works as a post-processing step after finding a fair weight vector minimizing the utility loss, it is compatible with the MILP-based algorithm.

\section{Experiments}
In this section, we evaluate the runtime performance of the two algorithms in the two-pronged solution for $n_p > 1$ under both optimization objectives, and validate the resulting fair weight vectors. We conclude with key observations and discussions on how to choose between the $k$-level-based and MILP-based algorithms under the two optimization objectives.

\subsection{Experimental setup}\label{subsec:exp-setup}
\paragraph{Experiments Design.} In our experiments, a weight vector $w^o$ was provided as an input, along with a parameter $\epsilon$ to control allowable deviation from $w^o$ for finding a fair weight vector when $w^o$ was not fair. Formally, for a given unfair weight vector $w^o$, the goal was to find a fair weight vector $w^{\scriptscriptstyle f}$ such that $|w^{\scriptscriptstyle f}_i - w^o_i| \leq \epsilon$ for all $ 1 \leq i \leq d$ while minimizing the disparity objective $g^o(w^{\scriptscriptstyle f})$ ($w$ difference or utility loss). We conducted two set of experiments, one to compare runtime performance and another to validate the algorithm's output under the optimization objectives. Note that different methods were used to generate input weight vectors for these two sets of experiments. For the runtime comparison, weight vectors were sampled uniformly at random until $20$ \textit{unfair} weight vectors were found, and the run time was averaged over the same set of $20$ samples (excluding the time for sampling). Notably, Algorithm~\ref{alg:fairtopkver} (with engineering optimizations for the backtrack subroutine) was used to determine whether a sampled weight vector was fair. A limit of 20 hours was set on the overall run time of an algorithm, and the time for obtaining a stable fair weight vector (under the utility loss disparity objective) was omitted in our runtime comparison, as the same method were applied to all algorithms (including baselines). For the output validation, $50$ weight vectors were sampled uniformly at random and used as inputs, regardless of whether they were fair or not.

\paragraph{Datasets.} Two real-world datasets were used in our experiments. Details are given as follows:
\begin{itemize}
    \item \textbf{\textit{COMPAS}} \cite{compas}: This dataset consists of 7,214 defendants from Broward County in Florida between 2013 and 2014. It is collected and published by ProPublica for their investigation of racial bias in the criminal risk assessment software. {\tt juv\_other\_count}, {\tt c\_days\_from\_compas}, {\tt priors\_count}, {\tt start}, {\tt end} and {\tt c\_jail\_out} $-$ {\tt c\_jail\_in} (the number of days staying in jail) were used as scoring attributes (6-D), and {\tt juv\_other\_count} and {\tt c\_days\_from\_compas} were used for 2-D experiments. ``African-American'' ($51.2\%$ in the entire dataset), ``Male'' ($80.7\%$ in the entire dataset), and their intersections, ``African-American Male'' ($42.2\%$ in the entire dataset), were the protected groups.
    \item \textbf{\textit{IIT-JEE}} \cite{iitjee}: This dataset consists of scores of 384,977 students in the Mathematics, Physics, and Chemistry sections of IIT-JEE 2009 along with their genders and other personal info. All subject scores were used as scoring attributes (3-D), and the Physics and Chemistry scores were used for 2-D experiments. ``Female'' ($25.5\%$ in the entire dataset) and ``reserved-category'' \cite{baswana2019centralized} ($39.6\%$ in the entire dataset) were the protected groups, with the latter consisting students from socially and economically backward or otherwise disadvantaged sections.
\end{itemize}

As in the experiments for $n_p = 1$~\cite{cai2025finding}, all values of scoring attributes were normalized to the range of $[0, 1]$. For our multi-dimensional experiments ($d \geq 3$), datasets were also preprocessed to shrink the input size using the $k$-skyband~\cite{papadias2005progressive, sheng2012worst} and the reverse top-$k$ query~\cite{vlachou2010reverse, chen2023not}, as in the $n_p = 1$ experiments. Since the size of the IIT-JEE dataset was larger, we used larger values of $k$ for it (ranging from $50$ to $500$), while smaller values of $k$ were used for the COMPAS dataset (ranging from $10$ to $100$). 

Synthetic data were deliberately excluded due to challenges in ``meaningfully'' reflecting real-world bias, as naive synthetic datasets (same distribution for all groups) diminish the need for fair top-$k$ selection.

\paragraph*{Baselines.} For comparison, \textsc{2draysweep} and \textsc{ATC\textsubscript{$+$}} from \cite{asudeh2019designing} were implemented as baselines, with the former (\textsc{2draysweep}) for 2-D experiments and the latter (\textsc{ATC\textsubscript{$+$}}) for multi-dimensional experiments. The improvements that enhanced their performance for $n_p = 1$ were also applied (as in~\cite{cai2025finding}), along with additional refinements regarding the size of $V$ (see Section~\ref{subsubsec:pract-klevel-md}) incorporated into \textsc{ATC\textsubscript{$+$}}. The tie-breaking was handled in the implementations of \textsc{2draysweep} but was ignored in the implementation of \textsc{ATC\textsubscript{$+$}}, as it was not obvious how to do so without introducing significant runtime overhead for \textsc{ATC\textsubscript{$+$}}.

\paragraph*{Environment and Implementation.} All our implementations were in C\texttt{++} on a 64-core AMD EPYC 7763 (2.45 GHz) machine with 512GB RAM, and our code\footnote{Available at \url{https://github.com/caiguangya/fair-topk-general}.\label{repo}} was compiled and run inside an Apptainer \cite{kurtzer2017singularity} container\footref{repo} with a Rocky Linux 8.10 host OS.

\paragraph*{Default values.} For experiments in 2-D, $k = 50$ and $\epsilon = 0.1$. For experiments in higher dimensions ($d \geq 3$), $k = 50$ and $\epsilon = 0.05$. For the fairness constraints, the proportion of the protected group candidates in top-$k$ selection results was constrained to $[40\%, 60\%]$, $[70\%, 90\%]$ and $[30\%, 55\%]$ for protected groups of COMPAS (``African-American'', ``Male'' and ``African-American Male'', respectively), and was constrained to $[10\%, 40\%]$ and $[15\%, 65\%]$ for protected groups of IIT-JEE (``Female'' and ``reserved-category'', respectively).

\subsection{Runtime experiments}
Since the main focus of this work is the speed of execution, we first present the results of the runtime experiments, starting with the 2-D experiments followed by results in higher dimensions.

\subsubsection{Runtime experiments for 2-D datasets}\label{subsubsec:exp2d-multi}
\begin{figure}[tbh!]
    \captionsetup[subfigure]{justification=centering}
    \centering
\begin{subfigure}[b]{0.3\columnwidth}
    \resizebox{\columnwidth}{!}{\begin{tikzpicture}
    \begin{axis}[
        width =\axisdefaultwidth,
        height= 180pt,
        legend style={font=\normalsize},
        legend pos=south east,
        xlabel={\LARGE $k$},
        ylabel={\Large Time (s)},
        xtick=data,
        xticklabels={10, 20, 50, 70, 100},
        xticklabel style = {font=\large},
        yticklabel style = {font=\large},
        mark size=4pt,
        ymode=log,
        ymin=1e-4,
        ymax=4e-3]
        \addplot[mark=o, color=blue] table [x=k, y=k-level-based-wd, col sep=comma] {compas-2d-k-multi.csv};
        \addplot[mark=o, mark options=solid, color=blue, densely dashed] table [x=k, y=k-level-based-u, col sep=comma] {compas-2d-k-multi.csv};
        \addplot[mark=diamond, color=black] table [x=k, y=baseline-wd, col sep=comma] {compas-2d-k-multi.csv};
        \addplot[mark=diamond, mark options=solid, color=black, densely dashed] table [x=k, y=baseline-u, col sep=comma] {compas-2d-k-multi.csv};
    \end{axis}
\end{tikzpicture}}
\caption{COMPAS, varying $k$}\label{2d-k1-multi}
\end{subfigure}
\hspace{10pt}
\begin{subfigure}[b]{0.3\columnwidth}
    \resizebox{\columnwidth}{!}{\begin{tikzpicture}
        \begin{axis}[
            width =\axisdefaultwidth,
            height= 180pt,
            xlabel={\LARGE $k$},
            ylabel={\Large Time (s)},
            xtick=data,
            xticklabels={50, 70, 100, 200, 500},
            xticklabel style = {font=\large},
            yticklabel style = {font=\large},
            mark size=4pt,
            ymode=log,
            ymin=0.009,
            ymax=1]
            \addplot[mark=o, color=blue] table [x=k, y=k-level-based-wd, col sep=comma] {jee-2d-k-multi.csv};
            \addplot[mark=o, mark options=solid, color=blue, densely dashed] table [x=k, y=k-level-based-u, col sep=comma] {jee-2d-k-multi.csv};
            \addplot[mark=diamond, color=black] table [x=k, y=baseline-wd, col sep=comma] {jee-2d-k-multi.csv};
            \addplot[mark=diamond, mark options=solid, color=black, densely dashed] table [x=k, y=baseline-u, col sep=comma] {jee-2d-k-multi.csv};
        \end{axis}
    \end{tikzpicture}}
    \caption{IIT-JEE, varying $k$}\label{2d-k2-multi}
\end{subfigure}
\hspace{10pt}
    \begin{subfigure}[b]{0.3\columnwidth}
        \resizebox{\columnwidth}{!}{\begin{tikzpicture}
            \begin{axis}[hide axis, legend columns=1, legend style={at={(0.58, 0.85)},anchor=north,legend cell align={left}}]
            
            \addlegendimage{mark=o, color=blue, mark size=4pt};
            \addlegendentry{$k$-level-based ($w$ difference)}
            
            \addlegendimage{mark=o, mark options=solid, densely dashed, color=blue, mark size=4pt};
            \addlegendentry{$k$-level-based (Utility loss)}
            
            \addlegendimage{mark=diamond, color=black, mark size=4pt};
            \addlegendentry{Baseline ($w$ difference)}
            
            \addlegendimage{mark=diamond, mark options=solid, densely dashed, color=black, mark size=4pt};
            \addlegendentry{Baseline (Utility loss)}

            \addplot [draw=none, forget plot] coordinates {(0,0)};
            \end{axis}
         \end{tikzpicture}}
    \end{subfigure}

\begin{subfigure}[b]{0.3\columnwidth}
    \resizebox{\columnwidth}{!}{\begin{tikzpicture}
    \begin{axis}[
        width =\axisdefaultwidth,
        height= 180pt,
        xlabel={\LARGE $\epsilon$},
        ylabel={\Large Time (s)},
        xtick=data,
        xticklabels={0.01, 0.05, 0.1, 0.15, 0.2},
        xticklabel style = {font=\large},
        yticklabel style = {font=\large},
        mark size=4pt,
        ymode=log,
        ymin=1e-4,
        ymax=4e-3]
        \addplot[mark=o, color=blue] table [x=epsilon, y=k-level-based-wd, col sep=comma] {compas-2d-eps-multi.csv};
        \addplot[mark=o, mark options=solid, color=blue, densely dashed] table [x=epsilon, y=k-level-based-u, col sep=comma] {compas-2d-eps-multi.csv};
        \addplot[mark=diamond, color=black] table [x=epsilon, y=baseline-wd, col sep=comma] {compas-2d-eps-multi.csv};
        \addplot[mark=diamond, mark options=solid, color=black, densely dashed] table [x=epsilon, y=baseline-u, col sep=comma] {compas-2d-eps-multi.csv};
    \end{axis}
\end{tikzpicture}}
\caption{COMPAS, varying $\epsilon$}\label{2d-eps1-multi}
\end{subfigure}
\hspace{10pt}
\begin{subfigure}[b]{0.3\columnwidth}
    \resizebox{\columnwidth}{!}{\begin{tikzpicture}
        \begin{axis}[
            width =\axisdefaultwidth,
            height= 180pt,
            xlabel={\LARGE $\epsilon$},
            ylabel={\Large Time (s)},
            xtick=data,
            xticklabels={0.01, 0.05, 0.1, 0.15, 0.2},
            xticklabel style = {font=\large},
            yticklabel style = {font=\large},
            mark size=4pt,
            ymode=log,
            ymin=0.009,
            ymax=1.9]
            \addplot[mark=o, color=blue] table [x=epsilon, y=k-level-based-wd, col sep=comma] {jee-2d-eps-multi.csv};
            \addplot[mark=o, mark options=solid, color=blue, densely dashed] table [x=epsilon, y=k-level-based-u, col sep=comma] {jee-2d-eps-multi.csv};
            \addplot[mark=diamond, color=black] table [x=epsilon, y=baseline-wd, col sep=comma] {jee-2d-eps-multi.csv};
            \addplot[mark=diamond, mark options=solid, color=black, densely dashed] table [x=epsilon, y=baseline-u, col sep=comma] {jee-2d-eps-multi.csv};
        \end{axis}
    \end{tikzpicture}}
    \caption{IIT-JEE, varying $\epsilon$}\label{2d-eps2-multi}
\end{subfigure}
\hspace{10pt}
\begin{subfigure}[b]{0.3\columnwidth}
    \resizebox{\columnwidth}{!}{\begin{tikzpicture}
        \begin{axis}[
            width =\axisdefaultwidth,
            height= 180pt,
            xlabel={\LARGE $n$ (ratio of full size)},
            ylabel={\Large Time (s)},
            xtick=data,
            xticklabels={0.2, 0.4, 0.6, 0.8, 1.0},
            xticklabel style = {font=\large},
            yticklabel style = {font=\large},
            mark size=4pt,
            ymode=log,
            ymin=0.001,
            ymax=1.0]
            \addplot[mark=o, color=blue] table [x=epsilon, y=k-level-based-wd, col sep=comma] {jee-2d-n-multi.csv};
            \addplot[mark=o, mark options=solid, color=blue, densely dashed] table [x=epsilon, y=k-level-based-u, col sep=comma] {jee-2d-n-multi.csv};
            \addplot[mark=diamond, color=black] table [x=epsilon, y=baseline-wd, col sep=comma] {jee-2d-n-multi.csv};
            \addplot[mark=diamond, mark options=solid, color=black, densely dashed] table [x=epsilon, y=baseline-u, col sep=comma] {jee-2d-n-multi.csv};
        \end{axis}
    \end{tikzpicture}}
    \caption{IIT-JEE, varying $n$}\label{2d-n-multi}
\end{subfigure}
\caption{Runtime experimental results for 2-D datasets.}
\end{figure}
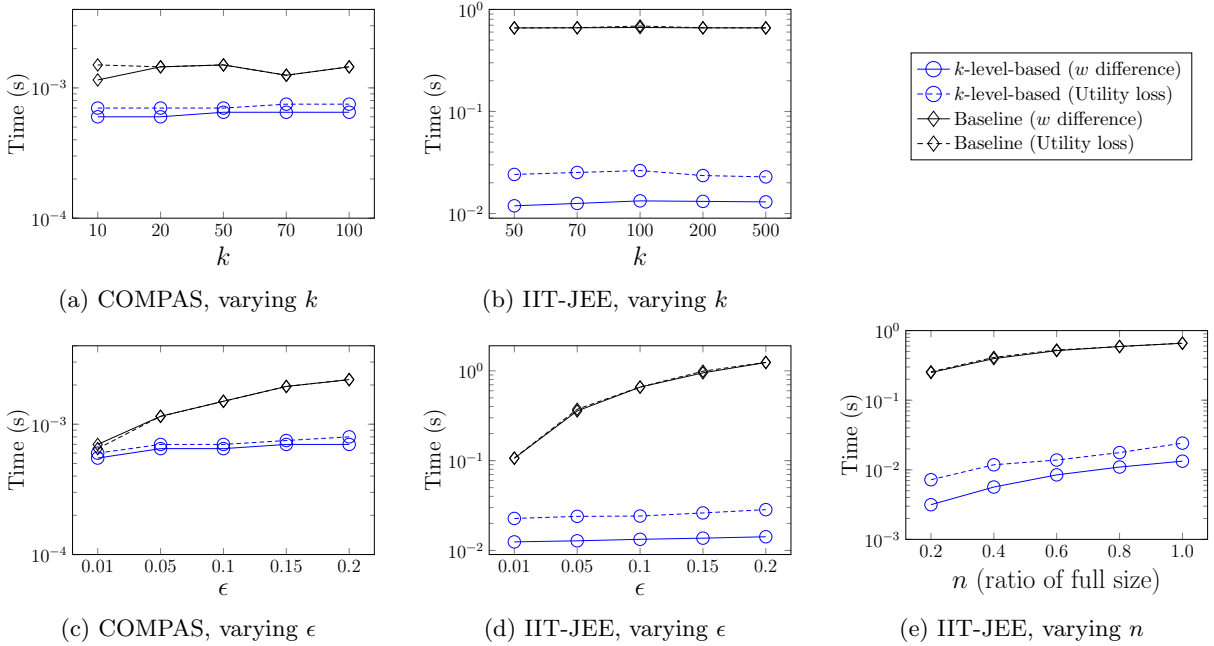

\paragraph{Varying $\boldsymbol{k}$.} Figures~\ref{2d-k1-multi} and \ref{2d-k2-multi} show the average run times as $k$ varies. The $k$-level-based algorithm performed consistently better than the baseline algorithm, achieving speedups of up to $50$x and $28$x for minimizing the $w$ difference and the utility loss, respectively. The performance of the $k$-level-based algorithm remained stable as $k$ increased since the actual $k$-level structural complexity was low. Consequently, the sweeping process accounted for only a small fraction of the total run time, as also noted in the analysis of $n_p = 1$ experiments~\cite{cai2025finding}.

\paragraph{Varying $\boldsymbol{\epsilon}$.} Figures~\ref{2d-eps1-multi} and \ref{2d-eps2-multi} show the average run times as $\epsilon$ varies. The $k$-level-based algorithm still performed consistently better than the baseline algorithm. The performance trend with respect to $\epsilon$ was consistent with the observations from $n_p = 1$ experiments.

\paragraph{Varying $\boldsymbol{n}$.} Figure~\ref{2d-n-multi} shows average run times as $n$ varies. We controlled $n$ by randomly selecting $20\%$, $40\%$, $60\%$, $80\%$, and $100\%$ of the IIT-JEE dataset. The $k$-level-based algorithm again consistently outperformed the baseline algorithm in all cases, with the performance trend being consistent with the observations from $n_p = 1$ experiments.

\paragraph{$\boldsymbol{w}$ difference vs. utility loss.} Across all experiments, utility loss minimization incurred greater overhead than $w$ difference minimization in the $k$-level-based algorithm. This increase stemmed from additional steps required to compute the utility of a top-$k$ subset in the backtracking algorithm and the cost of prerequisite computations (see Section~\ref{subsec:backtrack}). Nevertheless, the $k$-level-based algorithm with utility loss minimization remained efficient compared to the baseline algorithm. While utility loss minimization introduced similar overhead to the baseline algorithm, it accounted for only a small fraction of the total run time, given the baseline algorithm's inherently long execution time. Therefore, the performance of the baseline algorithm was nearly identical under the two optimization objectives.

\subsubsection{Runtime experiments for multi-dimensional datasets}\label{subsubsec:expmd-multi}
\begin{figure}[tbh!]
    \captionsetup[subfigure]{justification=centering}
    \centering
\begin{subfigure}[b]{0.3\columnwidth}
    \resizebox{\columnwidth}{!}{\begin{tikzpicture}
        \begin{axis}[
            width =\axisdefaultwidth,
            height= 180pt,
            legend pos=north east,
            xtick=data,
            symbolic x coords={COMPAS-wd, COMPAS-u, IIT-JEE-wd, IIT-JEE-u},
            xticklabels={\small COMPAS\\($w$ diff.), \small COMPAS\\(Util. loss), \small IIT-JEE\\($w$ diff.),  \small IIT-JEE\\(Util. loss)},
            x tick label style={
            align=center,
            },
            enlarge x limits= 0.2,
            bar width=10,
            ylabel={\large Time (s)},
            yticklabel style = {font=\Large},
            log origin=infty,
            ybar,
            ymin=1e-3,
            ytick={1e-3, 1e-2, 1e-1, 1, 10, 100, 1000, 10000},
            ymode=log]
            \addplot[red, pattern=north east lines, pattern color=red] coordinates {(COMPAS-wd, 1.515e-02)  (COMPAS-u, 1.240e-02) (IIT-JEE-wd, 3.485e-02)  (IIT-JEE-u, 2.990e-02)};
            \addplot[blue, pattern=north west lines, pattern color=blue] coordinates {(COMPAS-wd, 1.155e-02) (COMPAS-u, 1.090e-02) (IIT-JEE-wd, 1.930e-02) (IIT-JEE-u, 1.815e-02)};
            \addplot [pattern=crosshatch] coordinates {(COMPAS-wd, 1.860e+03) (COMPAS-u, 1.858e+03) (IIT-JEE-wd, 1.637e+01)  (IIT-JEE-u, 1.659e+01)};
            \legend{MILP-based, $k$-level-based, Baseline}
        \end{axis}
\end{tikzpicture}}
\caption{Baseline comparison}\label{md-baseline-compare-multi}
\end{subfigure}
\hspace{10pt}
\begin{subfigure}[b]{0.3\columnwidth}
    \resizebox{\columnwidth}{!}{\begin{tikzpicture}
    \begin{axis}[
        width =\axisdefaultwidth,
        height= 180pt,
        xlabel={\LARGE $k$},
        ylabel={\large Time (s)},
        xtick=data,
        xticklabels={10, 20, 50, 70, 100},
        yticklabel style = {font=\large},
        xticklabel style = {font=\large},
        unbounded coords=jump,
        mark size=4pt,
        ymode=log]
        \addplot[mark=square, color=red] table [x=k, y=milp-based-wd, col sep=comma] {compas-k-multi.csv};
        \addplot[mark=o, color=blue] table [x=k, y=k-level-based-wd, col sep=comma] {compas-k-multi.csv};
        \addplot[mark=square, color=red, mark options=solid, densely dashed] table [x=k, y=milp-based-u, col sep=comma] {compas-k-multi.csv};
        \addplot[mark=o, color=blue, mark options=solid, densely dashed] table [x=k, y=k-level-based-u, col sep=comma] {compas-k-multi.csv};
    \end{axis}
\end{tikzpicture}}
\caption{COMPAS, varying $k$}\label{hd-k1-multi}
\end{subfigure}
\hspace{10pt}
\begin{subfigure}[b]{0.3\columnwidth}
    \resizebox{\columnwidth}{!}{\begin{tikzpicture}
        \begin{axis}[
            width =\axisdefaultwidth,
            height= 180pt,
            xlabel={\LARGE $k$},
            ylabel={\large Time (s)},
            xtick=data,
            xticklabels={50, 70, 100, 200, 500},
            yticklabel style = {font=\large},
            xticklabel style = {font=\large},
            mark size=4pt,
            ymode=log]
            \addplot[mark=square, color=red] table [x=k, y=milp-based-wd, col sep=comma] {jee-k-multi.csv};
            \addplot[mark=o, color=blue] table [x=k, y=k-level-based-wd, col sep=comma] {jee-k-multi.csv};
            \addplot[mark=square, color=red, mark options=solid, densely dashed] table [x=k, y=milp-based-u, col sep=comma] {jee-k-multi.csv};
            \addplot[mark=o, color=blue, mark options=solid, densely dashed] table [x=k, y=k-level-based-u, col sep=comma] {jee-k-multi.csv};
        \end{axis}
    \end{tikzpicture}}
    \caption{IIT-JEE, varying $k$}\label{hd-k2-multi}
\end{subfigure}

\begin{subfigure}[b]{0.3\columnwidth}
    \resizebox{\columnwidth}{!}{\begin{tikzpicture}
    \begin{axis}[
        width =\axisdefaultwidth,
        height= 180pt,
        xlabel={\LARGE $\epsilon$},
        ylabel={\large Time (s)},
        xtick=data,
        xticklabels={0.01, 0.025, 0.05, 0.075, 0.1},
        yticklabel style = {font=\large},
        xticklabel style = {font=\large},
        mark size=3pt,
        ymode=log]
        \addplot[mark=square, color=red] table [x=epsilon, y=milp-based-wd, col sep=comma] {compas-eps-multi.csv};
        \addplot[mark=o, color=blue] table [x=epsilon, y=k-level-based-wd, col sep=comma] {compas-eps-multi.csv};
        \addplot[mark=square, color=red, mark options=solid, densely dashed] table [x=epsilon, y=milp-based-u, col sep=comma] {compas-eps-multi.csv};
        \addplot[mark=o, color=blue, mark options=solid, densely dashed] table [x=epsilon, y=k-level-based-u, col sep=comma] {compas-eps-multi.csv};
    \end{axis}
\end{tikzpicture}}
\caption{COMPAS, varying $\epsilon$}\label{hd-eps1-multi}
\end{subfigure}
\hspace{10pt}
\begin{subfigure}[b]{0.3\columnwidth}
    \resizebox{\columnwidth}{!}{\begin{tikzpicture}
        \begin{axis}[
            width =\axisdefaultwidth,
            height= 180pt,
            xlabel={\LARGE $\epsilon$},
            ylabel={\large Time (s)},
            xtick=data,
            xticklabels={0.01, 0.025, 0.05, 0.075, 0.1},
            yticklabel style = {font=\large},
            xticklabel style = {font=\large},
            mark size=3pt,
            ymode=log,
            ymax=1.0]
        \addplot[mark=square, color=red] table [x=epsilon, y=milp-based-wd, col sep=comma] {jee-eps-multi.csv};
        \addplot[mark=o, color=blue] table [x=epsilon, y=k-level-based-wd, col sep=comma] {jee-eps-multi.csv};
        \addplot[mark=square, color=red, mark options=solid, densely dashed] table [x=epsilon, y=milp-based-u, col sep=comma] {jee-eps-multi.csv};
        \addplot[mark=o, color=blue, mark options=solid, densely dashed] table [x=epsilon, y=k-level-based-u, col sep=comma] {jee-eps-multi.csv};
        \end{axis}
    \end{tikzpicture}}
    \caption{IIT-JEE, varying $\epsilon$}\label{hd-eps2-multi}
\end{subfigure}
\hspace{10pt}
\begin{subfigure}[b]{0.3\columnwidth}
    \resizebox{\columnwidth}{!}{\begin{tikzpicture}
        \begin{axis}[
            width =\axisdefaultwidth,
            height= 180pt,
            xlabel={\LARGE $d$},
            ylabel={\large Time (s)},
            xtick=data,
            xticklabels={3, 4, 5, 6},
            yticklabel style = {font=\large},
            xticklabel style = {font=\large},
            mark size=4pt,
            ymode=log]
        \addplot[mark=square, color=red] table [x=d, y=milp-based-wd, col sep=comma] {compas-d-multi.csv};
        \addplot[mark=o, color=blue] table [x=d, y=k-level-based-wd, col sep=comma] {compas-d-multi.csv};
        \addplot[mark=square, color=red, mark options=solid, densely dashed] table [x=d, y=milp-based-u, col sep=comma] {compas-d-multi.csv};
        \addplot[mark=o, color=blue, mark options=solid, densely dashed] table [x=d, y=k-level-based-u, col sep=comma] {compas-d-multi.csv};
        \end{axis}
    \end{tikzpicture}}
\caption{COMPAS, varying $d$}\label{hd-d-multi}
\end{subfigure}

\begin{subfigure}[b]{0.3\columnwidth}
    \resizebox{\columnwidth}{!}{\begin{tikzpicture}
        \begin{axis}[
            width =\axisdefaultwidth,
            height= 180pt,
            xlabel={\LARGE $n$ (ratio of full size)},
            ylabel={\large Time (s)},
            xtick=data,
            xticklabels={0.2, 0.4, 0.6, 0.8, 1.0},
            yticklabel style = {font=\large},
            xticklabel style = {font=\large},
            mark size=4pt,
            ymode=log]
            \addplot[mark=square, color=red] table [x=n, y=milp-based-wd, col sep=comma] {compas-n-multi.csv};
        \addplot[mark=o, color=blue] table [x=n, y=k-level-based-wd, col sep=comma] {compas-n-multi.csv};
        \addplot[mark=square, color=red, mark options=solid, densely dashed] table [x=n, y=milp-based-u, col sep=comma] {compas-n-multi.csv};
        \addplot[mark=o, color=blue, mark options=solid, densely dashed] table [x=n, y=k-level-based-u, col sep=comma] {compas-n-multi.csv};
        \end{axis}
    \end{tikzpicture}}
    \caption{COMPAS, varying $n$}\label{hd-n1-multi}
\end{subfigure}
\hspace{10pt}
\begin{subfigure}[b]{0.3\columnwidth}
    \resizebox{\columnwidth}{!}{\begin{tikzpicture}
        \begin{axis}[
            width =\axisdefaultwidth,
            height= 180pt,
            legend style={font=\small},
            legend pos=south east,
            xlabel={\LARGE $n$ (ratio of full size)},
            ylabel={\large Time (s)},
            xtick=data,
            xticklabels={0.2, 0.4, 0.6, 0.8, 1.0},
            yticklabel style = {font=\large},
            xticklabel style = {font=\large},
            mark size=4pt,
            ymode=log,
            ymax=1.0]
        \addplot[mark=square, color=red] table [x=n, y=milp-based-wd, col sep=comma] {jee-n-multi.csv};
        \addplot[mark=o, color=blue] table [x=n, y=k-level-based-wd, col sep=comma] {jee-n-multi.csv};
        \addplot[mark=square, color=red, mark options=solid, densely dashed] table [x=n, y=milp-based-u, col sep=comma] {jee-n-multi.csv};
        \addplot[mark=o, color=blue, mark options=solid, densely dashed] table [x=n, y=k-level-based-u, col sep=comma] {jee-n-multi.csv};
        \end{axis}
    \end{tikzpicture}}
\caption{IIT-JEE, varying $n$}\label{hd-n2-multi}
\end{subfigure}
\hspace{10pt}
\begin{subfigure}[b]{0.3\columnwidth}
        \resizebox{\columnwidth}{!}{\begin{tikzpicture}
            \begin{axis}[hide axis, legend columns=1, legend style={at={(0.58, 0.85)},anchor=north}, legend cell align={left}]
            
            \addlegendimage{mark=square, color=red, mark size=4pt};
            \addlegendentry{MILP-based ($w$ difference)}
            
            \addlegendimage{mark=square, color=red, mark options=solid, densely dashed, mark size=4pt};
            \addlegendentry{MILP-based (Utility loss)}

            \addlegendimage{mark=o, color=blue, mark size=4pt};
            \addlegendentry{$k$-level-based ($w$ difference)}
            
            \addlegendimage{mark=o, mark options=solid, densely dashed, color=blue, mark size=4pt};
            \addlegendentry{$k$-level-based (Utility loss)}

            \addplot [draw=none, forget plot] coordinates {(0,0)};
            \end{axis}
         \end{tikzpicture}}
    \end{subfigure}
\caption{Runtime experimental results for multi-dimensional datasets ($3 \leq d \leq 6$).}
\end{figure}
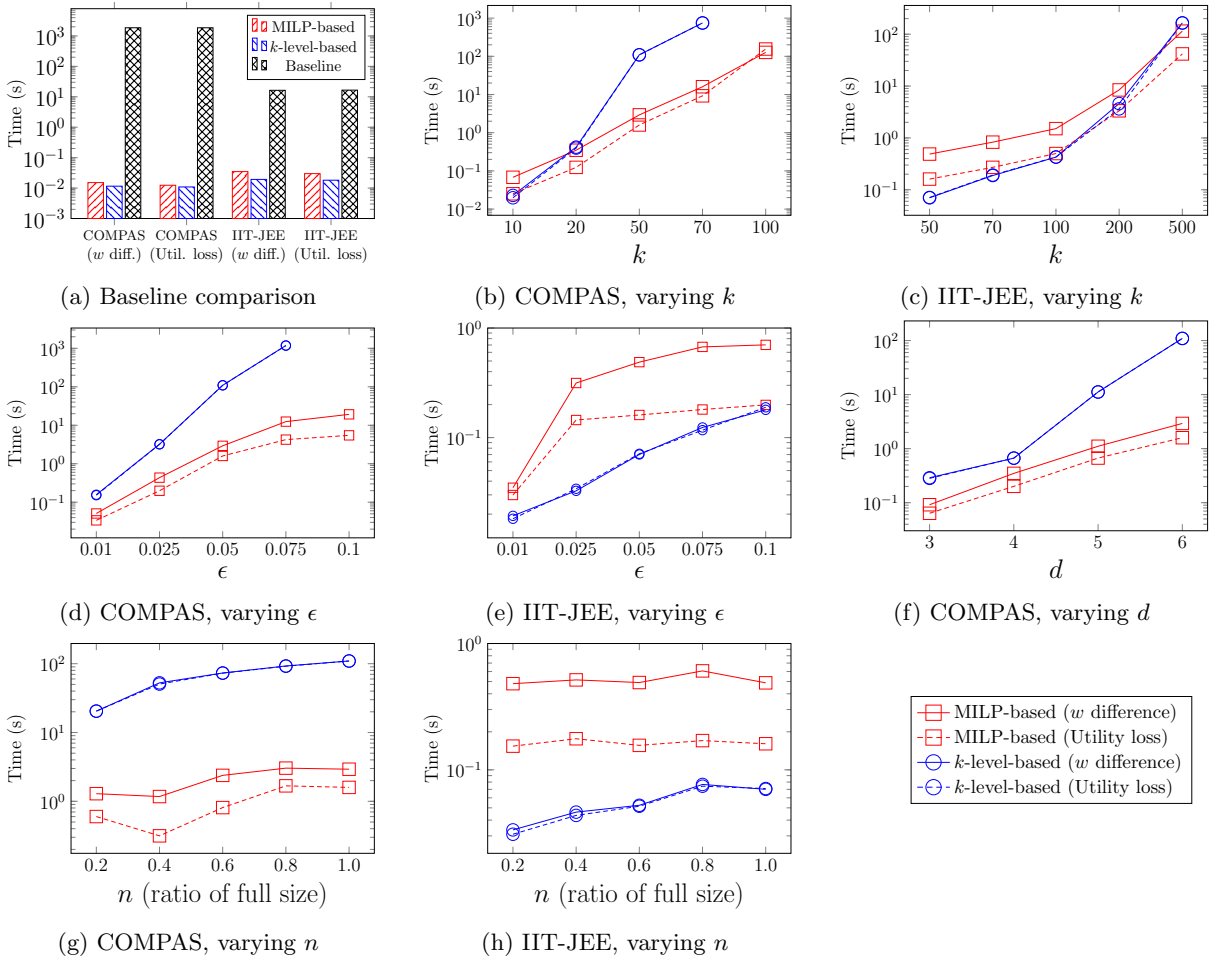

\paragraph{Baseline comparison.} Figure~\ref{md-baseline-compare-multi} shows a comparison of the two algorithms in the two-pronged solution against the baseline. For the COMPAS dataset, we set $k = 10$ and $\epsilon = 0.001$, while for the IIT-JEE dataset, we used $k = 50$ and $\epsilon = 0.01$. Despite the optimization applied to the $k$-level-based algorithm regarding the size of $V$ (Section~\ref{subsubsec:pract-klevel-md}) also improved the runtime efficiency of the baseline, particularly for the lower dimensional dataset (IIT-JEE), our two algorithms still achieved speedups of several orders of magnitude.

\paragraph{Varying $\boldsymbol{k}$.} Figures~\ref{hd-k1-multi} and \ref{hd-k2-multi} shows the average run times as $k$ varies. The $k$-level-based algorithm outperformed the MILP-based algorithm when $k$ was small, especially for the lower dimensional dataset (IIT-JEE). For the higher dimensional dataset (COMPAS), The MILP-based algorithm performed better. Despite the runtime improvement on the $k$-level-based algorithm due to the optimization regarding the size of $V$ (Section~\ref{subsubsec:pract-klevel-md}), it still failed to finish within the time limit when $k=100$. The general performance trends of the algorithms were consistent with the observations from $n_p = 1$ experiments~\cite{cai2025finding}.

\paragraph{Varying $\boldsymbol{\epsilon}$.} Figures~\ref{hd-eps1-multi} and \ref{hd-eps2-multi} show the average run times as $\epsilon$ varies. In both datasets, the run times of both algorithms increased as $\epsilon$ increased, which were consistent with the observations from $n_p = 1$ experiments.

\paragraph{Varying $\boldsymbol{n}$.} Figures~\ref{hd-n1-multi} and \ref{hd-n2-multi} show the average run times as $n$ varies. We controlled $n$ by randomly selecting a subset of data before preprocessing. In general, the run times of both algorithms increased as $n$ increased. For the $k$-level algorithm, its run time showed greater dependence on $n$ compared to experimental results with $n_p = 1$, as it cannot terminate early given the optimization objective. Some anomalies still existed but were also consistent with the observations from $n_p = 1$ experiments.

\paragraph{Varying $\boldsymbol{d}$.} Figure~\ref{hd-d-multi} shows the average run times as $d$ varies. We controlled $d$ by selecting a subset of scoring attributes of the COMPAS dataset after preprocessing. The run times of both algorithm increased dramatically as the dimensionality $d$ increased, with the $k$-level-algorithm increasing more rapidly, partly due to the linear programming algorithm~\cite{seidel1991small} used in our implementation. The general performance trends of the algorithms are also consistent with the observations from $n_p = 1$ experiments.

\paragraph{$\boldsymbol{w}$ difference vs. utility loss.} For the $k$-level based algorithm, minimizing the $w$ difference is expected to incur greater overhead than minimizing the utility loss, as the former requires solving a complex linear program while the latter only involves summing the scores of $k$ elements. However, their empirical performance was nearly identical across all experiments. Since this additional overhead applies only to fair cells, this observation suggests that the number of fair cells was small such that the extra computational costs incurred by fair cells may be negligible compared to the time spent on $(k-1)$-level cell traversal. In contrast, for the MILP-based algorithm, minimizing the $w$ difference does introduce noticeable overhead compared to minimizing the utility loss, with the former generally being slower than the latter.

\subsection{Validation experiments}
In validation experiments, we show that our algorithms indeed minimize the $w$ difference or the utility loss, in contrast to prior unaugmented algorithms~\cite{cai2025finding} that return an arbitrary fair weight vector within $V$. Notably, a variant of our algorithm for the Fair Top-$k$ Verification problem (Algorithm~\ref{alg:fairtopkver}) was used to compute utility losses, which was one of the reasons for making its corresponding backtracking subroutine numerically more robust (Section~\ref{subsec:backtrack}). Table~\ref{tab:validatation} compares the optimal $w$ differences and utility losses obtained by our augmented $k$-level-based and MILP-based algorithms with those reported in~\cite{cai2025finding}, which were obtained using unaugmented ones that did not optimize these objectives. Both our augmented algorithms achieved the same optimal results. Compared to the (unaugmented) MILP-based algorithm, improvements in terms of $w$ difference or utility loss were clear. Compared to the (unaugmented) $k$-level-based algorithm, the improvement in the $w$ difference was evident; however, the improvement in the utility loss was relatively modest, suggesting its breadth-search approach is effective in searching a fair weight vector with a small utility loss. Results consistent with the $n_p=1$ experiments were also observed for $n_p > 1$, as reported in Table~\ref{tab:validatation2}.

\begin{table}[tbh!]
  \centering
  \caption{Validation results for $n_p = 1$.}\label{tab:validatation}
  \begin{threeparttable}
  \begin{tabular}{ |c|c|c|ccc|ccc| }
    \toprule
       \multirow{2}{*}{Dataset} & \multirow{2}{*}{$\epsilon$} & \multirow{2}{*}{\shortstack{Fairness \\ constraint\tnote{1}}} & \multicolumn{3}{c|}{$w$ difference}  & \multicolumn{3}{c|}{Utility loss} \\
       & & & $k$-level & MILP & Optimal & $k$-level & MILP & Optimal \\
    \hline
       \multirow{6}{*}{\shortstack{COMPAS\\($k=50$)}} & \multirow{3}{*}{0.05} & $[34\%, 66\%]$ & 0.149 & 0.211 & \textbf{0.060} & 0.476\% & 1.210\% & \textbf{0.392\%}\\
      & & $[40\%, 60\%]$ & 0.144 & 0.213 & \textbf{0.073} & 0.449\% & 1.328\% & \textbf{0.347\%}\\ 
      & & $[44\%, 56\%]$ & 0.155 & 0.202 & \textbf{0.086} & 0.718\% & 1.319\% & \textbf{0.656\%}\\ 
    \cline{2-9}
      & 0.025 & \multirow{3}{*}{$[40\%, 60\%]$} & 0.088 & 0.090 & \textbf{0.044} & 0.121\% & 0.142\% & \textbf{0.086\%} \\
      & 0.05 & & 0.144 & 0.213 & \textbf{0.073} & 0.449\% & 1.328\% & \textbf{0.347\%} \\ 
      & 0.075 & & 0.247 & 0.308 & \textbf{0.162} & 2.416\% & 3.187\% & \textbf{2.161\%} \\ 
    \midrule
    \multirow{6}{*}{\shortstack{IIT-JEE\\($k=100$)}} & \multirow{3}{*}{0.05} & $[5\%, 45\%]$ & 0.055 & 0.089 & \textbf{0.039} & 0.033\% & 0.067\% & \textbf{0.031\%}\\
    & & $[7\%, 43\%]$ & 0.062 & 0.089 & \textbf{0.055} & 0.025\% & 0.043\% & \textbf{0.019\%}\\ 
    & & $[9\%, 41\%]$ & 0.089 & 0.092 & \textbf{0.072} & 0.028\% & 0.027\% & \textbf{0.021\%}\\
    \cline{2-9}
    & 0.025 & \multirow{3}{*}{$[9\%, 41\%]$} & 0.050 & 0.050 & \textbf{0.039} & 0.022\% & 0.020\% & \textbf{0.020\%}\\
    & 0.05 & & 0.089 & 0.092 & \textbf{0.072} & 0.028\% & 0.027\% & \textbf{0.021\%} \\ 
    & 0.075 & & 0.102 & 0.134 & \textbf{0.083} & 0.036\% & 0.096\% & \textbf{0.031\%} \\
    \bottomrule
  \end{tabular}
  \begin{tablenotes}
    \footnotesize
    \item[1]The fairness constraint for ``African-American'' (COMPAS) and ``Female'' (IIT-JEE), respectively.
  \end{tablenotes}
  \end{threeparttable}
\end{table}

\begin{table}[tbh!]
  \centering
  \caption{Validation results for COMPAS dataset with default fairness constraints ($n_p = 3$) and $k = 50$.}\label{tab:validatation2}
  \begin{threeparttable}
  \begin{tabular}{ |c|c|ccc|ccc| }
    \toprule
       \multirow{2}{*}{$\epsilon$} & \multirow{2}{*}{\shortstack{Found/Unfair\\Ratio\tnote{1}}} & \multicolumn{3}{c|}{$w$ difference}  & \multicolumn{3}{c|}{Utility loss} \\
       & & $k$-level & MILP & Optimal & $k$-level & MILP & Optimal \\
    \hline
      0.025 & 2/45 & 0.068 & 0.133 & \textbf{0.025} & 0.075\% & 0.310\% & \textbf{0.032\%} \\
      0.05 & 7/45 & 0.184 & 0.203 & \textbf{0.113} & 0.093\% & 1.227\% & \textbf{0.080\%} \\ 
      0.075 & 16/45 & 0.273 & 0.300 & \textbf{0.191} & 2.635\% & 3.338\% & \textbf{2.458\%} \\ 
    \bottomrule
  \end{tabular}
  \begin{tablenotes}
    \footnotesize
    \item[1]The ratio between the number of input unfair weight vectors and found fair ones.
  \end{tablenotes}
  \end{threeparttable}
\end{table}

\subsection{Observation summary and algorithm selection}
In this section, we summarize key experimental observations and discuss algorithm selection regarding the two optimization objectives based on them.

\paragraph{2-D experiments.} For the $k$-level-based algorithm, the performance trends seen by varying difference variables ($k$, $\epsilon$ and $n$) were consistent with the previous observations from $n_p = 1$ experiments, which evaluated an unaugmented algorithm without an optimization objective. The run times also remained stable across $k$ for both datasets, as the sweeping process accounted for only a small fraction of the total run time. This reinforced our choice of the unidirectional sweeping approach in the 2-D $k$-level-based algorithm. The utility loss minimization incurred greater overhead than the $w$ difference minimization, partly because the backtracking subroutine for utility loss minimization was engineered more conservatively. The same subroutine was also used in the validation experiments, raising the importance of ensuring its correctness and accuracy. Nevertheless, the (augmented) $k$-level-based algorithm was still substantial faster than the baseline for minimizing the utility loss.

\paragraph{Multi-dimensional experiments.} The performance trends of both the $k$-level-based and the MILP-based algorithm were also  consistent with the observations from $n_p = 1$ experiments, which evaluated unaugmented algorithms without an optimization objective. Unlike the 2-D case, the run times of the $k$-level-based algorithm under the two optimization objectives were nearly identical across all experiment settings. This was because the additional overhead applies only to fair cells, which were few in number in our experiments. In contrast, the MILP-based algorithm was noticeably slower in minimizing the $w$ difference than in minimizing the utility loss.

Regarding selecting between the two algorithms of the two-pronged solution, the general guideline, that one selects the $k$-level-based algorithm for a small $k$ and the MILP-based algorithm for a large $k$, remains valid. The threshold value of $k$ for choosing between the two algorithms still decreases as the dimensionality $d$ increases. However, this threshold is now influenced by the choice of optimization objective, as it is likely to be higher when minimizing the $w$ difference than when minimizing the utility loss.

\section{Conclusion}
We presented an integrated study of the fair top-$k$ selection problem in a generalized setting that considers multiple protected groups while also minimizes the disparity from a reference, unfair scoring function. We started with the hardness analysis---driven by the necessities of experimental exploration---to establish the theoretical foundation and guide algorithm design. This was followed by algorithm design exploiting the gap revealed by the hardness analysis and augmenting the two-pronged solution. We then addressed practical engineering considerations, balancing implementation complexity, robustness, and performance through careful engineering trade-offs. Finally, empirical evaluation explored real-world scenarios and assessed the impact of the newly introduced problem constraints, informing algorithm design and implementation decisions. Through this integrative framework that integrates hardness analysis, algorithm design, practical engineering and empirical evaluation, we arrive at an efficient two-pronged solution that is significantly faster than the state of the art, accompanied by many insights into problem structure and algorithmic behavior.

\section*{Acknowledgements}
The author would like to acknowledge the Minnesota Supercomputing Institute (MSI) at the University of Minnesota for providing computing facilities.

\bibliographystyle{plain}
\bibliography{ref}

\end{document}